\newcommand{\Rmnum}[1]{\expandafter\@slowromancap\romannumeral #1@}
\newtheorem{theorem}{Theorem}
\newtheorem{lemma}{Lemma}
\newtheorem{remark}{Remark}
\newtheorem{thm}{\protect\theoremname}
\newtheorem{prop}[thm]{\protect\propositionname}
\providecommand{\propositionname}{Proposition}
\patchcmd{\maketitle}{\@fnsymbol}{\@alph}{}{}  
\title{Caching and Coded Delivery over Gaussian Broadcast Channels for Energy Efficiency}
\author{
  Mohammad Mohammadi Amiri,~\IEEEmembership{Student Member,~IEEE}, and~Deniz~G\"und\"uz,~\IEEEmembership{Senior Member,~IEEE}
}
\date{}
\begin{document}

\maketitle

\begin{abstract}

A cache-aided $K$-user Gaussian broadcast channel (BC) is considered. The transmitter has a library of $N$ equal-rate files, from which each user demands one. The impact of the equal-capacity receiver cache memories on the minimum required transmit power to satisfy all user demands is studied. Considering uniformly random demands across the library, both the minimum average power (averaged over all demand combinations) and the minimum peak power (minimum power required to satisfy all demand combinations) are studied. Upper bounds are presented on the minimum required average and peak transmit power as a function of the cache capacity considering both \textit{centralized} and \textit{decentralized} caching. The lower bounds on the minimum required average and peak power values are also derived assuming uncoded cache placement. The bounds for both the peak and average power values are shown to be tight in the centralized scenario through numerical simulations. The results in this paper show that proactive caching and coded delivery can provide significant energy savings in wireless networks. 
\end{abstract}

\begin{IEEEkeywords}
Gaussian broadcast channel, centralized caching, decentralized caching, joint cache-channel coding.
\end{IEEEkeywords}

\section{Introduction}\label{Intro}

We\makeatletter{\renewcommand*{\@makefnmark}{}
\footnotetext{Manuscript received December 2, 2017; revised April 6, 2018; accepted April 18, 2018. This work received support from the European Research Council (ERC) through Starting Grant BEACON (agreement 677854).}\makeatother} study proactive content caching to user devices followed by coded delivery \cite{MaddahAliCentralized}, assuming that the delivery phase takes place over a Gaussian broadcast channel (BC) from the server to the users.\makeatletter{\renewcommand*{\@makefnmark}{}
\footnotetext{The authors are with Imperial College London, London SW7 2AZ, U.K. (e-mail: m.mohammadi-amiri15@imperial.ac.uk; d.gunduz@imperial.ac.uk).}\makeatother} Several other recent papers have studied coded delivery over noisy channels. Fading and interference channel models are considered in \cite{HuangFadingChannelcodedcaching} and \cite{NaderializadehMaddahAliInterference}, respectively. In \cite{ShirinErasureChannelJournal} and \cite{MohammadDenizPacketErasureJournal}, centralized caching is considered while the delivery phase takes place over a packet-erasure BC. The capacity-memory trade-off is investigated in this setting assuming that only the weak users have caches, which requires knowledge about the channel qualities in the delivery phase in advance. Proactive caching and coded delivery over a Gaussian BC is studied in \cite{PetrosEliaTopological} in the high power regime. In \cite{ShirinWiggerYenerCacheAssingment} cache-aided data delivery is studied over a degraded BC in a centralized setting, while \cite{LeiDelaySuperPosWirelessNet} focuses on the delay analysis in a cache-aided wireless channel with coded delivery by utilizing superposition coding. A multi-antenna server is considered in \cite{EnricoHamdiBrunoISIT17,NgoYangKobayashiScalable,PooyaCaireKhalajPhysicalLayerJournal} with different assumptions regarding the channel state information available at the transmitter.


While most of the previous works on coded delivery over noisy wireless broadcast channels focus on the high signal-to-noise ratio (SNR) regime; in this work, our goal is to highlight the benefits of proactive caching and coded delivery from an energy efficiency perspective. The impact of proactive caching on energy efficiency was previously studied in \cite{GungorProactive} and \cite{GregoryDtoD} in an offline scenario, assuming known demands and channel conditions. However, these works consider only proactive caching and do not benefit from coded delivery.

In \cite{NaderializadehMaddahAliInterference,ShirinErasureChannelJournal,MohammadDenizPacketErasureJournal,PetrosEliaTopological,ShirinWiggerYenerCacheAssingment}, users' channel conditions in the delivery phase are assumed to be known in advance during the placement phase. However, in practical wireless networks, it is often not possible to know the identities of users that will participate in a particular delivery phase, let alone their channel conditions. Therefore, our goal here is to study the benefits of proactive caching in reducing the transmit power, assuming that the noiseless cache placement phase is carried out without the knowledge of channel conditions during the delivery phase.

Assuming uniform popularity across the files, we first study the minimum required \textit{average power} to serve all the users, averaged over all the user demand combinations. Note that we allow the transmitter to change its power depending on the demand combination in order to minimize the average power consumption. We then consider the transmit power required to satisfy the worst-case demand combination, called the \textit{peak power}. Building upon our previous work in \cite{MohammadDenizGaussianBCISIT17}, we first provide upper bounds on the minimum average and peak power values as a function of the rate of the files in the library and the capacity of the user caches, for centralized cache placement. We then extend the proposed scheme by considering \textit{decentralized} cache placement. The proposed delivery strategy employs superposition coding and power allocation, and the achievable transmit power for any demand combination is derived thanks to the degradedness of a Gaussian BC. We further derive lower bounds on the performance assuming uncoded cache placement.

The main novelty of the proposed proactive caching and coded delivery scheme is the way the coded packets designed for each user are generated for any demand combination, particularly when a file may be requested by more than one user, and the way these coded packets are delivered over a Gaussian BC in order to minimize the transmit power. We show that the proposed achievable scheme reduces the transmit power significantly, even with the availability of only a small cache capacity at each receiver, in both the centralized and decentralized scenarios. It is also shown that the power loss between the centralized and the more practical decentralized scenario is quite small. Furthermore, numerical results show that the gaps between the peak and average transmit powers of the proposed achievable scheme for the centralized scenario and the corresponding lower bounds are negligible. In particular, we have observed numerically that the multiplicative gap between the two bounds for the centralized caching is below 2 for the examples considered. Numerical results also illustrate that adjusting the transmit power based on the demand vector can significantly reduce the average power consumption compared to the worst-case demand combination.


\section{System Model and Preliminaries}\label{SystemModel}
We study cache-aided content delivery over a $K$-user Gaussian BC. The server has a library of $N$ files, $\mathbf{W} \buildrel \Delta \over = \left( W_1, ..., W_N \right)$, each distributed uniformly over the set\footnote{For any positive integer $i$, $\left[ i \right]$ denotes the set $\{ 1, ..., i \}$.} $\left[ \left\lceil 2^{nR} \right\rceil \right]$, where $R$ is the rate of the files and $n$ denotes the blocklength, referring to $n$ uses of the BC. Each user has a cache of size $nMR$ bits. We define the \textit{normalized global cache capacity} as $t \buildrel \Delta \over = MK/N$.  

Data delivery from the server to the users takes place in two phases. Caches of the users are filled during the initial \textit{placement phase}, which takes place over a period of low traffic and high energy efficiency; and therefore, data delivery in the placement phase is assumed to be error-free and at a negligible energy cost\footnote{We assume that the placement phase takes place over a significantly longer period of time and over orthogonal high-quality links; which, in theory, allows the server to achieve the minimum energy per bit required to send the cache contents to the users.}; however, without either the knowledge of the user demands, or the users' future channel gains when they place their requests. The caching function for user $k \in [K]$ is
\begin{equation}\label{CachingFunction} 
{\phi _k}: {\left[ \left\lceil 2^{nR} \right\rceil \right]^N} \to \left[ \left\lfloor 2^{nMR} \right\rfloor \right],
\end{equation}
which maps the library to the cache contents $U_k$ of user $k$, i.e., $U_k={\phi _k}\left( \mathbf{W} \right)$, for $k\in [K]$.

During the peak traffic period, each user requests a single file from the library, where $d_k \in [N]$ denotes the index of the file requested by user $k \in [K]$. We assume that the user demands are independent and uniformly distributed over the file library; that is ${\rm{Pr}} \left\{ d_k=i \right\} = 1/N$, $\forall i \in [N], \forall k \in [K]$, and $\textbf{d} \buildrel \Delta \over = \left( d_1, ..., d_K \right)$ denotes the demand vector. The requests must be satisfied simultaneously during the \textit{delivery phase}. As opposed to the placement phase, the delivery phase takes place over a noisy BC, characterized by
\begin{equation}\label{ChannelModel} 
{Y_{k,i}} \left( \textbf{W},\textbf{d} \right) = h_{k} {X_{i}} \left( \textbf{W},\textbf{d} \right) + {Z_{k,i}}, \; \mbox{for $i \in [n], k \in [K]$},
\end{equation}
where ${X_{i}} \left( \textbf{W},\textbf{d} \right)$ denotes the transmitted signal from the server at time $i$, $h_{k}$ is a real channel gain between the server and user $k$, $Z_{k,i}$ is the zero-mean unit-variance real Gaussian noise at user $k$ at time $i$, i.e., $Z_{k,i} \sim \mathcal{N} \left( 0,1 \right)$, and ${Y_{k,i}} \left( \textbf{W},\textbf{d} \right)$ is the signal received at user $k$. The noise components are assumed independent and identically distributed (i.i.d.) across time and users. Without loss of generality, we assume that $h_1^2 \le h_2^2 \le \cdots \le h_K^2$. The channel input vector ${X^{n}} \left( \textbf{W},\textbf{d} \right)$ is generated by 
\begin{equation}\label{DeliveryFunction} 
\psi :{\left[ \left\lceil 2^{nR} \right\rceil \right]^N} \times {\left[ N \right]^K} \to \mathbb{R}^n,
\end{equation}     
and its average power is given by $P\left(\textbf{W},\textbf{d}\right) \buildrel \Delta \over = \frac{1}{n}\sum\nolimits_{i = 1}^n {X_{i}^2\left( \textbf{W},\textbf{d} \right)}$. We define the average power of this encoding function for demand vector $\textbf{d}$ as 
\begin{equation}\label{DefPowerConstraint}
P\left( \textbf{d} \right) \buildrel \Delta \over = \mathop {\max }\limits_{{W_1},...,{W_N}} P\left( {\textbf{W},\textbf{d}} \right),
\end{equation}
where the maximization is over all possible realizations of the file library. 

User $k \in [K]$ reconstructs ${\hat W}_{d_k}$ using its channel output $Y_{k}^n \left( \textbf{W},\textbf{d} \right)$, local cache contents $U_k$, channel vector $\textbf{h} \buildrel \Delta \over = \left(h_1, ..., h_K\right)$, and demand vector $\textbf{d}$ through the function
\begin{equation}\label{DecodingFunctionWeak} 
{\mu _k}:\mathbb{R}^n \times \left[ \left\lfloor 2^{nMR} \right\rfloor \right] \times \mathbb{R}^K \times \left[ N \right]^K \to \left[ \left\lceil 2^{nR} \right\rceil \right],
\end{equation}
where ${{\hat W}_{{d_k}}} = {\mu _k}\left( {Y_{k}^n \left( \textbf{W},\textbf{d} \right),U_k,\textbf{h},\textbf{d}} \right)$. The probability of error is defined as
\begin{equation}\label{ErrorProbability} 
{P_{e}} \buildrel \Delta \over = \Pr \left\{ \bigcup\nolimits_{\textbf{d} \in {[N]}^K}{{\bigcup\nolimits_{k = 1}^K {\left\{ {{{\hat W}_{{d_k}}} \ne {W_{{d_k}}}} \right\}} }} \right\}.
\end{equation}

An $(n,R,M)$ \textit{code} consists of $K$ caching functions ${\phi _1}, \dots, {\phi _K}$, channel encoding function $\psi$, and $K$ decoding functions ${\mu _1}, \dots, {\mu _K}$. We say that an $\left( R,M,\bar P, \hat P \right)$ tuple is \textit{achievable} if for every $\varepsilon > 0$, there exists an $(n,R,M)$ code with sufficiently large $n$, which satisfies ${P_{e}} < \varepsilon$, ${{\rm E}_{\textbf{d}}}\left[ {P(\textbf{d})} \right] \le \bar P$, and ${P(\textbf{d})} \le \hat P$, $\forall \textbf{d}$. For given rate $R$ and normalized cache capacity $M$, the average and peak power-memory trade-offs are defined, respectively, as
\begin{subequations}
\label{AverageandPeakPowerMemoryTradeOff}
\begin{align}\label{AveragePowerMemoryTradeOff}
\bar P^*\left( {R,M} \right) \buildrel \Delta \over = &\inf \left\{ {\bar P:\left( {R,M,\bar P,\infty } \right) \mbox{is achievable}} \right\},\\
{{\hat P}^*}\left( {R,M} \right) \buildrel \Delta \over =& \inf \left\{ {\hat P:\left( {R,M,\hat P,\hat P} \right) \mbox{is achievable}} \right\}.\label{PeakPowerMemoryTradeOff}
\end{align}
\end{subequations}

\begin{remark}
In the above definition, ${{\bar P}^*}$ is evaluated by allowing a different transmission power for each demand combination, and minimizing the average power across demands; while ${{\hat P}^*}$ characterizes the worst-case transmit power, which can also be considered as the minimum transmit power required to satisfy all possible demands if the transmitter is not allowed to adapt its transmit power according to user demands.
\end{remark}

%
%

We will consider both centralized and decentralized caching, and assume, in both scenarios that the placement phase is performed without any information about the channel gains during the delivery phase.

\begin{prop}\label{GeneralPowerAllocationLemma}
\textit{\cite{BergmansCapacityDegradeBC}} Consider a $K$-user Gaussian BC presented above with $M=0$, where a distinct message of rate $R_{k}$ is targeted for user $k$, $k \in [K]$. The minimum total power $P$ that is required to deliver all $K$ messages reliably can be achieved by superposition coding with Gaussian codewords of power $\alpha_kP$ allocated for user $k$, $\forall k \in [K]$, where\footnote{For two integers $i$ and $j$, if $i > j$, we assume that $\sum\nolimits_{n=i}^j {a_n}  = 0$, and $\prod\nolimits_{n=i}^j a_n  = 1$, where $a_n$ is an arbitrary sequence.}
\begin{align}\label{alphaiP}
&\alpha_{k}P = \left( \frac{2^{2R_{k}} - 1}{h_k^2} \right) \nonumber\\
& \;\;\;\;\;\quad \left( {1 + h_k^2 \sum\nolimits_{i = k + 1}^K \left( \frac{2^{2R_{i}} - 1}{h_i^2} \right)\prod\nolimits_{j = k + 1}^{i - 1} {{2^{2{R_{j}}}}} } \right),
\end{align}
and the total transmitted power is
\begin{equation}\label{PowerSetGeneralTerm}
{P} = \sum\nolimits_{k = 1}^{K} \left( \frac{2^{2R_{k}} - 1}{h_k^2} \right)\prod\nolimits_{i = 1}^{k - 1} {{2^{2{R_{i}}}}} .
\end{equation}
\end{prop}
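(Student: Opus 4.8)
The plan is to establish the two assertions separately: that superposition coding with the stated allocation reliably delivers all $K$ messages, and (the converse) that no scheme uses less total power. For achievability I would first put the channel in standard degraded form. Dividing \eqref{ChannelModel} by $h_k$ gives the equivalent received signal $X_i + \tilde Z_{k,i}$ with $\tilde Z_{k,i} \sim \mathcal{N}(0,1/h_k^2)$; since $h_1^2 \le \cdots \le h_K^2$, the effective noise variances are ordered $1/h_1^2 \ge \cdots \ge 1/h_K^2$, so user $1$ is the weakest and the channel is physically degraded with Markov chain $X \to Y_K^n \to \cdots \to Y_1^n$. I would then superimpose independent Gaussian codewords $X_k$ of power $P_k$, one per user, and transmit $X = \sum_{k=1}^K X_k$. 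Decoding proceeds by successive interference cancellation from the weakest codeword upward: to decode message $k$, a receiver treats the codewords of the stronger users $k+1,\dots,K$ as additional Gaussian noise of power $\sum_{i=k+1}^K P_i$. Because the per-message rate function $\tfrac12\log_2(1 + h^2 P_k/(1 + h^2 \sum_{i>k}P_i))$ is increasing in the channel gain $h^2$, any user $j \ge k$ can decode message $k$ whenever user $k$ can; hence the binding constraint for each message is that user $k$ decodes its own codeword.

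The next step is pure bookkeeping: solve the resulting rate equations for the minimal powers. Setting $R_k = \tfrac12\log_2\bigl(1 + h_k^2 P_k/(1 + h_k^2 S_k)\bigr)$ with $S_k \buildrel \Delta \over = \sum_{i=k+1}^K P_i$ and $S_K = 0$, I obtain $P_k = \tfrac{2^{2R_k}-1}{h_k^2}(1 + h_k^2 S_k)$. Substituting into $S_k = P_{k+1} + S_{k+1}$ yields the clean recursion $S_k = \tfrac{2^{2R_{k+1}}-1}{h_{k+1}^2} + 2^{2R_{k+1}}S_{k+1}$, from which a short induction (using the empty-product convention) gives $S_k = \sum_{i=k+1}^K \tfrac{2^{2R_i}-1}{h_i^2}\prod_{j=k+1}^{i-1} 2^{2R_j}$. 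This is exactly the bracketed factor in \eqref{alphaiP}, establishing the per-user powers, and summing $P = \sum_k P_k = S_0$ reproduces \eqref{PowerSetGeneralTerm}.

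The harder part, and where I expect the main obstacle to lie, is the \emph{converse}: showing this power is minimal. I would argue that for any code achieving rates $(R_1,\dots,R_K)$ with vanishing error probability, the total power is at least $P$ in \eqref{PowerSetGeneralTerm}. The standard route is Fano's inequality, which reduces reliable decodability to mutual-information bounds $nR_k \le I(W_k;Y_k^n) + n\epsilon_n$, combined with the (conditional, $n$-dimensional) entropy power inequality applied recursively along the degradedness chain $Y_K^n \to \cdots \to Y_1^n$. Conditioning on the messages of the stronger users, the EPI lower-bounds the differential-entropy increment seen at each successively weaker receiver, which forces the power devoted to each layer to be at least $P_k$. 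The delicate point is correctly handling the conditional differential entropies and chaining the EPI bounds across all $K$ users so that the individual layer powers sum to the claimed minimum. This is precisely the argument carried out in \cite{BergmansCapacityDegradeBC}, on which I would rely, so the converse reduces to matching its capacity-region characterization to the power-minimization form \eqref{alphaiP}--\eqref{PowerSetGeneralTerm}.
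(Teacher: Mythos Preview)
Your proof proposal is correct and well-organized, but there is nothing to compare it against: the paper does not prove this proposition at all. It is stated with a citation to \cite{BergmansCapacityDegradeBC} and then used as a black box (for instance in \eqref{RateRegionCentNonInt}--\eqref{RequiredPowerEachDemandDeliveryCentralized} and \eqref{RateRegionDecentNew}--\eqref{RequiredPowerEachDemandDelivery}); no argument, not even a sketch, appears in the manuscript.

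That said, your outline is sound on its own terms. The achievability computation is exactly right: the recursion $S_k = \tfrac{2^{2R_{k+1}}-1}{h_{k+1}^2} + 2^{2R_{k+1}}S_{k+1}$ unwinds to the bracketed sum in \eqref{alphaiP}, and $P=S_0$ gives \eqref{PowerSetGeneralTerm}. Your identification of the converse as the Bergmans entropy-power-inequality argument is also correct, and since the proposition is attributed to that reference anyway, deferring the converse to \cite{BergmansCapacityDegradeBC} is entirely appropriate. In short, you have supplied a proof where the paper supplies only a citation.
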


For a demand vector $\textbf{d}$ in the delivery phase, we denote the number of distinct demands by $N_{\textbf{d}}$, where $N_{\textbf{d}} \le \min \left\{ {N,K} \right\}$. Let $\mathcal U_{\textbf{d}}$ denote the set of users with distinct requests, which have the worst channel qualities; that is, $\mathcal{U}_{\textbf{d}}$ consists of $N_{\textbf{d}}$ indices corresponding to users with distinct requests, where a user is included in set $\mathcal{U}_{\textbf{d}}$ iff it has the worst channel quality among all the users with the same demand, i.e., if $k \in \mathcal{U}_{\textbf{d}}$ and $d_k = d_m$ for some $m \in [K]$, then $h_k^2 \le h_{m}^2$, or equivalently, $k \le m$. Note that, for any demand vector $\textbf{d}$, $1 \in \cal{U}_{\textbf{d}}$. For given $\textbf{d}$ and user $k$, $k \in [K]$, let $\mathcal{U}_{\textbf{d},k}$ denote the set of users in $\mathcal{U}_{\textbf{d}}$ which have better channels than user $k$:
\begin{equation}\label{DefSetUdi}
{\cal U}_{\textbf{d},k} \buildrel \Delta \over = \left\{ i \in \mathcal {U}_{\textbf{d}}:\mbox{$i > k$} \right\}, \quad \mbox{$k \in [K]$}.
\end{equation}
We denote the cardinality of $\mathcal{U}_{\textbf{d},k}$ by $N_{\textbf{d},k}$, i.e., $N_{\textbf{d},k} \buildrel \Delta \over = \left| \mathcal{U}_{\textbf{d},k} \right|$.

\section{Centralized Caching and Delivery}\label{ProposedSchemeCentralized} 

Here we present our centralized caching and coded delivery scheme. We follow the placement phase in \cite{YuMaddahAliAvestimehrExact} since the users' channel gains are not known in advance. In the delivery phase, our goal is to identify the coded packets targeted to each user in order to minimize the transmit power. We will use superposition coding in sending multiple coded packets, and benefit heavily from the degradedness of the underlying Gaussian BC.

\begin{theorem}\label{UpperboundPowerMemoryTheoremCentralized}
In centralized caching followed by delivery over a Gaussian BC, we have
\begin{subequations}
\label{AchievablePowerMemoryTheoremDemandCentralized}
\begin{align}\label{AchievablePowerMemoryTheoremDemandAchievablePowerMemoryTheoremWorstCaseTheoremCentralized1}
& \bar P^*(R,M) \le \bar  P_{\rm{UB}}^{\rm{C}}(R,M) \buildrel \Delta \over = \nonumber\\
&\qquad \quad \frac{1}{N^K} \sum\limits_{\emph{\textbf{d}} \in \left[ N \right]^K} { \left[ \sum\limits_{i = 1}^K {\left( \frac{{{2^{2{R^{\rm{C}}_{\emph{\textbf{d}},i}}}} - 1}}{h_i^2} \right)\prod\limits_{j = 1}^{i - 1} {{2^{2{R^{\rm{C}}_{\emph{\textbf{d}},j}}}}} } \right] },
\end{align}
where
\begin{align}\label{AchievablePowerMemoryTheoremDemandAchievablePowerMemoryTheoremWorstCaseTheoremCentralized2}
{R^{\rm{C}}_{\emph{\textbf{d}},k}} \buildrel \Delta \over = 
\begin{cases} 
\frac{\binom{K-k}{\left\lfloor t \right\rfloor}}{\binom{K}{\left\lfloor t \right\rfloor}}\left( \left\lfloor t \right\rfloor +1 -t \right)R+\\
\qquad \frac{\binom{K-k}{\left\lfloor t \right\rfloor+1}}{\binom{K}{\left\lfloor t \right\rfloor+1}}\left(t- \left\lfloor t \right\rfloor \right)R, &\mbox{if $k \in \mathcal{U}_{\emph{\textbf{d}}}$},\\
\frac{\binom{K-k}{\left\lfloor t \right\rfloor}-\binom{K-k-N_{\emph{\textbf{d}},k}}{\left\lfloor t \right\rfloor}}{\binom{K}{\left\lfloor t \right\rfloor}}\left( \left\lfloor t \right\rfloor +1 -t \right)R + \\
\qquad \frac{\binom{K-k}{\left\lfloor t \right\rfloor+1}-\binom{K-k-N_{\emph{\textbf{d}},k}}{\left\lfloor t \right\rfloor+1}}{\binom{K}{\left\lfloor t \right\rfloor+1}}\left(t- \left\lfloor t \right\rfloor \right)R, &\mbox{otherwise},
\end{cases}
\end{align}
\end{subequations}
and 
\begin{subequations}
\label{AchievablePowerMemoryTheoremWorstCaseCentralized}
\begin{align}\label{AchievablePowerMemoryTheoremWorstCaseCentralized1}
\hat P^*(R,M) \le \hat P^{\rm{C}}_{\rm{UB}}(R,M) \buildrel \Delta \over = \sum\limits_{i = 1}^{K} {\left( \frac{{2^{2{\hat R^{\rm{C}}_{\emph{\textbf{d}},i}}} - 1}}{h_i^2} \right)\prod\limits_{j = 1}^{i - 1} {{2^{2{\hat R^{\rm{C}}_{\emph{\textbf{d}},j}}}}} },
\end{align}
where 
\begin{align}\label{AchievablePowerMemoryTheoremWorstCaseCentralized2}
{\hat R^{\rm{C}}_{\emph{\textbf{d}},k}} \buildrel \Delta \over = 
\begin{cases} 
\frac{\binom{K-k}{\left\lfloor t \right\rfloor}}{\binom{K}{\left\lfloor t \right\rfloor}}\left( \left\lfloor t \right\rfloor +1 -t \right)R+\\
\qquad \frac{\binom{K-k}{\left\lfloor t \right\rfloor+1}}{\binom{K}{\left\lfloor t \right\rfloor+1}}\left(t- \left\lfloor t \right\rfloor \right)R, &\mbox{if $k \in \left[ \min\{N,K\} \right]$},\\
0, &\mbox{otherwise}.
\end{cases}
\end{align}
\end{subequations}
\end{theorem}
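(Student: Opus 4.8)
The plan is to prove achievability by pairing the uncoded placement of \cite{YuMaddahAliAvestimehrExact} with a superposition-coded delivery phase tailored to the degraded Gaussian BC, and then to read off the transmit power from Proposition~\ref{GeneralPowerAllocationLemma}. Since $t=MK/N$ need not be an integer, I first invoke memory-sharing: each file is split into two parts, carrying fractions $\left(\lfloor t\rfloor+1-t\right)$ and $\left(t-\lfloor t\rfloor\right)$ of its bits, which are cached using the scheme of \cite{YuMaddahAliAvestimehrExact} with integer cache parameters $\lfloor t\rfloor$ and $\lfloor t\rfloor+1$, respectively. This is consistent because $t=\left(\lfloor t\rfloor+1-t\right)\lfloor t\rfloor+\left(t-\lfloor t\rfloor\right)\left(\lfloor t\rfloor+1\right)$, and it accounts for the two weighted terms in every rate expression of the theorem. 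In what follows I treat a single portion with integer parameter $t'$ and superpose the two portions at the end.

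Fix a demand vector $\textbf{d}$. Following \cite{YuMaddahAliAvestimehrExact}, the server forms, for every $S\subseteq[K]$ with $|S|=t'+1$ and $S\cap\mathcal{U}_{\textbf{d}}\neq\emptyset$, the coded packet $\bigoplus_{k\in S}W_{d_k,S\setminus\{k\}}$, each carrying a $1/\binom{K}{t'}$ fraction of the portion. The central idea is to assign each such packet to the superposition layer indexed by its weakest user $\min S$; by Proposition~\ref{GeneralPowerAllocationLemma} and the degradedness $h_1^2\le\cdots\le h_K^2$, user $k$ then reliably decodes every layer with index at most $k$. Since $k\in S$ forces $\min S\le k$, user $k$ recovers every transmitted packet indexed by a set containing $k$ and extracts the corresponding subfile by cancelling the cached terms; for a subfile $W_{d_k,B}$ whose set $B\cup\{k\}$ is leaderless, user $k$ reconstructs the missing packet from the leader-containing packets obtained by replacing users of $B\cup\{k\}$ with the leaders of their demands, exactly as in \cite{YuMaddahAliAvestimehrExact}. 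The key point is that every packet entering this reconstruction lies at a layer with index at most $k$: each such set contains either $k$ itself or the leader of $d_k$, which is weaker than $k$; hence all of them are decodable by user $k$. This establishes reliable delivery for every $\textbf{d}$.

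It remains to compute the rate each user must decode, which is the per-packet fraction times the number of transmitted sets $S$ with $\min S=k$. When $k\in\mathcal{U}_{\textbf{d}}$ every such set contains the leader $k$, giving $\binom{K-k}{t'}$ sets; when $k\notin\mathcal{U}_{\textbf{d}}$ we subtract the leaderless sets, whose remaining $t'$ elements avoid all $N_{\textbf{d},k}$ leaders stronger than $k$, giving $\binom{K-k}{t'}-\binom{K-k-N_{\textbf{d},k}}{t'}$. Summing the two memory-sharing portions yields exactly $R^{\rm C}_{\textbf{d},k}$ in \eqref{AchievablePowerMemoryTheoremDemandAchievablePowerMemoryTheoremWorstCaseTheoremCentralized2}. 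Feeding the tuple $\{R^{\rm C}_{\textbf{d},k}\}_{k=1}^{K}$ into Proposition~\ref{GeneralPowerAllocationLemma} gives the minimum power for demand $\textbf{d}$, and averaging over the $N^K$ equiprobable demand vectors produces the bound \eqref{AchievablePowerMemoryTheoremDemandAchievablePowerMemoryTheoremWorstCaseTheoremCentralized1} on $\bar P^*(R,M)$.

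For the peak bound I must show that this per-demand power never exceeds the value built from $\hat R^{\rm C}_{\textbf{d},k}$. Observe that $\hat R^{\rm C}_{\textbf{d},k}$ is precisely the rate profile arising when the $\min\{N,K\}$ leaders occupy the weakest channels $1,\dots,\min\{N,K\}$: then every stronger user has no leader above it, so $N_{\textbf{d},k}=0$ collapses its rate to zero, matching the ``otherwise'' branch. The task is therefore to prove that this configuration maximizes the power of Proposition~\ref{GeneralPowerAllocationLemma} over all demands, which I expect to be the main obstacle. I would argue it in two monotonicity steps: first, that the power is nondecreasing in the number of distinct demands, so the worst case uses the maximal $\min\{N,K\}$ distinct files; and second, a rearrangement argument showing that, for a fixed number of leaders, shifting the high-rate leader layers toward the weakest users can only increase the power, because each rate $R_i$ enters both its own term $\left(2^{2R_i}-1\right)/h_i^2$ and the prefactors $\prod_{j<i}2^{2R_j}$ of all stronger users, and small $h_i^2$ amplifies the former. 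Establishing this rearrangement inequality rigorously is delicate, since moving a leader simultaneously changes several users' rates through the $N_{\textbf{d},k}$ terms; once it is in place, evaluating Proposition~\ref{GeneralPowerAllocationLemma} at the extremal profile $\hat R^{\rm C}_{\textbf{d},k}$ yields \eqref{AchievablePowerMemoryTheoremWorstCaseCentralized1}.
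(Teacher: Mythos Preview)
Your proposal is correct and follows essentially the same route as the paper: the same uncoded placement from \cite{YuMaddahAliAvestimehrExact}, the same memory-sharing split between parameters $\lfloor t\rfloor$ and $\lfloor t\rfloor+1$, the same assignment of each leader-containing coded packet to the superposition layer of its weakest user, the same counting that produces $R^{\rm C}_{\textbf{d},k}$, and the same invocation of Proposition~\ref{GeneralPowerAllocationLemma} for the per-demand power. Your decodability argument for leaderless sets---that every packet in the reconstruction contains either $k$ or the leader of $d_k$, hence lies at a layer $\le k$---is exactly the paper's argument phrased slightly differently.

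The one place you leave open is the peak-power rearrangement, which you correctly flag as the main obstacle. The paper handles it by an adjacent-swap argument (its Appendix~B): given any demand with a leader at position $k$ and a non-leader at $k-1$, swap their roles and verify by direct calculation that the power does not decrease, using the identity $R_{{\rm NL},k-1}(N_{\textbf{d},k}+1)=R_{{\rm NL},k}(N_{\textbf{d},k})+R_{{\rm L},k-1}-R_{{\rm L},k}$ to collapse the difference to a single nonnegative product. Iterating pushes all leaders to $[N_{\textbf{d}}]$, after which monotonicity in $N_{\textbf{d}}$ is immediate since the remaining users carry zero rate. This is precisely the two-step program you sketched; the paper's contribution is just the explicit swap computation that makes the ``delicate'' step concrete.
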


\begin{proof}
For simplicity, we assume that both $nR$ and $nMR$ are integers. The proposed scheme is first presented for integer normalized global cache capacities, that is $t \in [0:K]$. The scheme is then extended to any $t \in [0,K]$.   

\subsection{Integer $t$ Values}\label{CentralizedIntegerValues}

Here, we assume that $t \in [0:K]$. We denote the $t$-element subsets of $K$ users by $\mathcal{C}_1^{t}, \mathcal{C}_2^{t}, \dots, \mathcal{C}_{\binom{K}{t}}^{t}$. 

\underline{\textbf{Placement phase:}} A centralized cache placement phase is performed without the knowledge of the future user demands or the channel gains in the delivery phase. Each file $W_i$, $i \in [N]$, is split into $\binom{K}{t}$ equal-length subfiles $W_{i,\mathcal{C}_1^{t}}, W_{i,\mathcal{C}_2^{t}},$ $\dots, W_{i,\mathcal{C}_{\binom{K}{{t}}}^{t}}$, each of rate $R/\binom{K}{{t}}$. User $k$, $k \in [K]$, caches subfile $W_{i,\mathcal{C}_l^{t}}$, if $k \in \mathcal{C}_l^{t}$, for $i \in [N]$ and $l \in \left[\binom{K}{{t}}\right]$. Hence, the cache contents of user $k$ is given by
\begin{equation}\label{CacheContentUserkDecentralized}
{U_k} = \bigcup\nolimits_{i \in \left[ N \right]} {\bigcup\nolimits_{l \in \left[\binom{K}{{t}}\right]: k \in \mathcal{C}_l^{t}} {W_{i,\mathcal{C}_l^{t}}} }, \quad \mbox{for $k \in [K]$},
\end{equation}
where the cache capacity constraint is satisfied with equality. 


\underline{\textbf{Delivery phase:}} For an arbitrary $\textbf{d}$, we will deliver the following coded message to the users in $\mathcal{C}_l^{t+1}$, $\forall l \in \left[ \binom{K}{t+1} \right]$:
\begin{equation}\label{DefCodedDeliveredContentCentralized}
Q_{\mathcal{C}_l^{{t}+1}} \buildrel \Delta \over = {{\bigoplus}_{k \in \mathcal{C}_l^{{t}+1}} {W_{{d_k},\mathcal{C}_l^{{t}+1}\backslash \{ k\} }}}, 
\end{equation}
where ${\oplus}$ represents the bitwise XOR operation. Then, each user $i \in \mathcal{C}_l^{{t}+1}$ can recover subfile $W_{{d_i},\mathcal{C}_l^{{t}+1}\backslash \{ i\} }$, since it has cached all the other subfiles $W_{{d_j},\mathcal{C}_l^{{t}+1}\backslash \{ j\} }$, $\forall j \in \mathcal{C}_l^{{t}+1} \backslash \{i\}$. Note that each coded message $Q_{\mathcal{C}_l^{{t}+1}}$, $l\in \left[ \binom{K}{{t}+1} \right]$, is of rate $R/\binom{K}{{t}}$. Note also that, for $k \in [K]$, sending $\bigcup\nolimits_{l:k \in \mathcal {C}_l^{{t} + 1}} Q_{\mathcal{C}_l^{{t}+1}}$ to user $k$ enables that user to obtain all the subfiles $W_{{d_k},\mathcal{C}_l^{{t}}}$, $\forall l \in \left[ \binom{K}{{t}} \right]$ and $k \notin {\mathcal{C}_l^{{t}}}$. Thus, together with its cache contents, the demand of user $k$, $k \in [K]$ would be fully satisfied after receiving $\bigcup\nolimits_{l:k \in \mathcal {C}_l^{{t} + 1}} Q_{\mathcal{C}_l^{t+1}}$. As observed in \cite{YuMaddahAliAvestimehrExact}, for a demand vector $\textbf{d}$ with $N_{\textbf{d}}$ distinct requests, if $K-N_{\textbf{d}}\ge {t}+1$, not all the coded messages $Q_{\mathcal{C}_l^{{t}+1}}$, $\forall l \in \left[ \binom{K}{{t}} \right]$, need to be delivered.

Following \cite[Lemma 1]{YuMaddahAliAvestimehrExact}, for a demand vector ${\textbf{d}}$ with $N_{{\textbf{d}}}$ distinct requests, let $\mathcal U_{{\textbf{d}}} \subset {\cal B} \subset [K]$. We define $\mathcal {G}_{\cal B}$ as the set consisting of all the subsets of $\cal B$ with cardinality $N_{{\textbf{d}}}$, such that all $N_{{\textbf{d}}}$ users in each subset request distinct files. For any $\cal B$, we have ${\bigoplus }_{\mathcal G \in {\mathcal{G}_{\mathcal B}}} {Q_{\mathcal B\backslash \mathcal G}} = \textbf{0}$, where \textbf{0} denotes the zero vector.

\begin{remark}\label{RemAfterLemma}
Given a demand vector $\emph{\textbf{d}}$ with $N_{\emph{\textbf{d}}} < K$, and any set ${\cal S} \subset [K]\backslash \mathcal U_{\emph{\textbf{d}}}$ of users, by setting $\mathcal{B} = {\cal S} \cup \cal {U}_{\emph{\textbf{d}}}$, we have
\begin{align}\label{LemmaEqConc}
{\bigoplus }_{{\cal G} \in {{\cal G}_{\cal B}}} {Q_{{\cal B}\backslash {\cal G}}} & = \left( {\bigoplus }_{{\cal G} \in {{\cal G}_{\cal B}}\backslash \mathcal U_{\emph{\textbf{d}}}} {Q_{{\cal B}\backslash {\cal G}}} \right) \oplus {Q_{{\cal B}\backslash \mathcal U_{\emph{\textbf{d}}}}} \nonumber\\
& = \left( {\bigoplus }_{{\cal G} \in {{\cal G}_{\cal B}}\backslash \mathcal U_{\emph{\textbf{d}}}} {Q_{{\cal B}\backslash {\cal G}}} \right) \oplus {Q_{\mathcal {S}}} = \emph{\textbf{0}},
\end{align}
which leads to
\begin{equation}\label{LemmaEqConcLast}
{Q_{\mathcal {S}}} = {\bigoplus }_{\mathcal {G} \in {\mathcal {G}_{\cal B}}\backslash {\mathcal U_{\emph{\textbf{d}}}}} {Q_{\mathcal{B}\backslash \mathcal {G}}}.
\end{equation}
Thus, having received all the coded messages $Q_{\mathcal{B} \backslash \mathcal{G}}$, $\forall \mathcal {G} \in {\mathcal {G}_{\cal B}}\backslash {\mathcal U_{\emph{\textbf{d}}}}$, $Q_{\mathcal {S}}$ can be recovered through \eqref{LemmaEqConcLast}. Note that, for any $\mathcal {G} \in {\mathcal {G}_{\cal B}}\backslash {\cal U}_{\emph{\textbf{d}}}$, we have 
\begin{subequations}
\label{ResultsLemma}
\begin{align}\label{ResultsLemma1}
&\left| {{\cal B}\backslash {\cal G}} \right| = \left| \mathcal S \right|,\\
&\left( {\cal B}\backslash {\cal G} \right) \cap {\cal U}_{\emph{\textbf{d}}} \ne \emptyset,
\label{ResultsLemma2}
\end{align}
\end{subequations}
that is, each coded message on the right hand side (RHS) of \eqref{LemmaEqConcLast} is targeted for a set of $\left| \mathcal S \right|$ users, at least one of which is in set $\mathcal U_{\emph{\textbf{d}}}$. Furthermore, for each $k \in \cal S$, there is a user $k' \in \mathcal U_{\emph{\textbf{d}}}$ with $h_{k'}^2 \le h_{k}^2$, such that $d_{k'} = d_k$. Note that, since no two users with the same demand are in any of the sets ${\cal G} \in {\mathcal {G}_{\cal B}}$, for any set $\mathcal {G} \in {\mathcal {G}_{\cal B}}\backslash {\cal U}_{\emph{\textbf{d}}}$, we have either $k \in {\cal B} \backslash \mathcal {G}$ or $k' \in {\cal B} \backslash \mathcal {G}$.
\end{remark}

Given a demand vector $\textbf{d}$, the delivery phase is designed such that only the coded messages $Q_{\mathcal{C}_l^{{t}+1}}$, $\forall l \in \left[ \binom{K}{{t}+1} \right]$ such that ${\mathcal{C}_l^{{t}+1}} \cap {\cal U}_{{\textbf{d}}} \ne \emptyset$, are delivered, i.e., the coded messages that are targeted for at least one user in ${\cal U}_{{\textbf{d}}}$ are delivered, and the remaining coded messages can be recovered through \eqref{LemmaEqConcLast}. To achieve this, for any such set ${\mathcal{C}_l^{{t}+1}}$ with ${\mathcal{C}_l^{{t}+1}} \cap {\cal U}_{{\textbf{d}}} \ne \emptyset$, the transmission power is adjusted such that the worst user in ${\mathcal{C}_l^{{t}+1}}$ can decode $Q_{\mathcal{C}_l^{{t}+1}}$; and so can all the other users in ${\mathcal{C}_l^{{t}+1}}$ due to the degradedness of the Gaussian BC. As a result, the demand of every user in $\cal{U}_{\textbf{d}}$ will be satisfied.

We aim to find the coded packets targeted for each user that will minimize the transmitted power, while guaranteeing that all the user demands are satisfied. In delivering the coded messages, we start from the worst user, i.e., user $1$, and first transmit all the coded messages targeted for user 1. We then target the second worst user, and transmit the coded messages targeted for it that have not been already delivered, keeping in mind that only the coded messages $Q_{\mathcal{C}_l^{{t}+1}}$ for which ${\mathcal{C}_l^{{t}+1}} \cap {\cal U}_{{\textbf{d}}} \ne \emptyset$ are delivered. We continue similarly until we deliver the messages targeted for the best user in ${\cal U}_{{\textbf{d}}}$. 

We denote the contents targeted for user $k$ by ${\tilde Q}_k$, and their total rate by $R_{\textbf{d},{t},k}$, $k \in [K]$. For a demand vector $\textbf{d}$, contents
\begin{equation}\label{CentIntMesUser1}
{\tilde Q}_1 = \bigcup\nolimits_{l:1 \in \mathcal {C}_l^{{t} + 1}} Q_{\mathcal{C}_l^{{t}+1}}
\end{equation}
are targeted for user 1. Note that there are $\binom{K-1}{{t}}$ different $(t+1)$-element subsets $\mathcal {C}_l^{{t} + 1}$, in which user 1 is included. Thus, the total rate of the messages targeted for user 1 is
\begin{equation}\label{TotalRateIntendedUserOneCentralized}
R_{\textbf{d},{t},1} = \frac{\binom{K-1}{{t}}}{\binom{K}{{t}}}R = \frac{K-t}{K}R. 
\end{equation}
For $k \in [2:K]$, the coded contents targeted for user $k$ that have not been sent through the transmissions to the previous $k-1$ users and are targeted for at least one user in ${\cal U}_{{\textbf{d}}}$ are delivered. Thus, for $k \in [2:K]$, we deliver
\begin{equation}\label{CodedContentsUserkNotinLeadersCentralized}
{\tilde Q}_k = \bigcup\nolimits_{l: {\mathcal{C}_l^{t+1}} \cap {\cal U}_{{\textbf{d}}} \ne \emptyset, {\mathcal{C}_l^{t+1}} \cap [k-1] = \emptyset, k \in {\mathcal{C}_l^{t+1}}} Q_{\mathcal{C}_l^{t+1}}, 
\end{equation}
which is equivalent to
\begin{align}\label{CodedContentsUserkNotinLeadersEquivalentCentralized}
{\tilde Q}_k = & \left(\bigcup\nolimits_{l: {\mathcal{C}_l^{t+1}} \cap [k-1] = \emptyset, k \in {\mathcal{C}_l^{t+1}}} Q_{\mathcal{C}_l^{t+1}}\right) - \nonumber\\
& \left( \bigcup\nolimits_{l: {\mathcal{C}_l^{t+1}} \cap {\cal U}_{{\textbf{d}}} = \emptyset, {\mathcal{C}_l^{t+1}} \cap [k-1] = \emptyset, k \in {\mathcal{C}_l^{t+1}}} Q_{\mathcal{C}_l^{t+1}} \right). 
\end{align}
For each user $k \in [2:K]$, there are $\binom{K-k}{t}$ different $(t+1)$-element subsets ${\mathcal{C}_l^{t+1}}$, such that ${\mathcal{C}_l^{t+1}} \cap [k-1] = \emptyset$ and $k \in {\mathcal{C}_l^{t+1}}$, for $l=1, \dots, \binom{K}{t+1}$. On the other hand, for each user $k \in [2:K] \backslash \mathcal{U}_{\textbf{d}}$, since there are $N_{{\textbf{d}},k}$ users among the set of users $[k:K]$ that belong to set $\cal{U}_{\textbf{d}}$, there are $\binom{K-k-N_{{\textbf{d}},k}}{t}$ different $(t+1)$-element subsets ${\mathcal{C}_l^{t+1}}$, such that ${\mathcal{C}_l^{t+1}} \cap \left( {\cal{U}}_{\textbf{d}} \cup [k-1] \right)= \emptyset$ and $k \in {\mathcal{C}_l^{t+1}}$, for $l=1, \dots, \binom{K}{t+1}$. Note that, if $k \in {\cal{U}}_{\textbf{d}}$, the second term on the RHS of \eqref{CodedContentsUserkNotinLeadersEquivalentCentralized} includes no content. Thus, if $k \in [2:K] \cap \cal{U}_{\textbf{d}}$, total rate targeted for user $k$ is
\begin{equation}\label{TotalRateIntendedUserkinLeadersCentralized}
R_{\textbf{d},t,k} = \frac{\binom{K-k}{t}}{\binom{K}{t}}R = \left( {\prod\nolimits_{i = 0}^{k - 1} {\frac{{K - t - i}}{{K - i}}} } \right)R, 
\end{equation} 
while the total rate targeted for user $k$, $k \in [2:K] \backslash \cal{U}_{\textbf{d}}$, is
\begin{align}\label{TotalRateIntendedUserkNotinLeadersCentralized}
R_{\textbf{d},t,k} =\frac{\binom{K-k}{t}-\binom{K-k-N_{{\textbf{d}},k}}{t+1}}{\binom{K}{t}}R.
\end{align}
In summary, the proposed achievable scheme intends to deliver contents of total rate $R_{\textbf{d},t,k}$ to user $k$, for $k \in [K]$, where
\begin{align}\label{TotalRateEachUserCentralized}
{R_{{\textbf{d}},t,k}} \buildrel \Delta \over = 
\begin{cases} 
\frac{\binom{K-k}{t}}{\binom{K}{t}}R, &\mbox{if $k \in \mathcal{U}_{{\textbf{d}}}$},\\
\frac{\binom{K-k}{t}-\binom{K-k-N_{{\textbf{d}},k}}{t}}{\binom{K}{t}}R, &\mbox{otherwise}.
\end{cases}
\end{align}
The centralized caching and coded packet generation explained above is summarized in Algorithm \ref{CentralizedSchemeAlg}.

\begin{algorithm}[t]
\caption{Centralized Caching and Coded Packet Generation}
\label{CentralizedSchemeAlg}
\begin{algorithmic}[1]
\Statex
\Procedure {Placement Phase}{}
\State{$W_i = {\bigcup\nolimits_{l =1}^{\binom{K}{{t}}} {W_{i,\mathcal{C}_l^{t}}} }$, \quad for $i=1, ..., N$}
\Statex
\State{${U_k} = \bigcup\nolimits_{i \in \left[ N \right]} {\bigcup\nolimits_{l \in \left[\binom{K}{{t}}\right]: k \in \mathcal{C}_l^{t}} {W_{i,\mathcal{C}_l^{t}}} }, \quad \mbox{for $k=1, \dots, K$}$}
\EndProcedure
\Statex
\Procedure {Coded Packet Generation}{}
\State{$Q_{\mathcal{C}_l^{{t}+1}} = {{\bigoplus}_{k \in \mathcal{C}_l^{{t}+1}} {W_{{d_k},\mathcal{C}_l^{{t}+1}\backslash \{ k\} }}}$, \; for $l=1, ..., \binom{K}{t+1}$}
\Statex
\State{${\tilde Q}_k = \bigcup\nolimits_{l: {\mathcal{C}_l^{t+1}} \cap {\cal U}_{{\textbf{d}}} \ne \emptyset, {\mathcal{C}_l^{t+1}} \cap [k-1] = \emptyset, k \in {\mathcal{C}_l^{t+1}}} Q_{\mathcal{C}_l^{t+1}}$, \; for $k=1, \dots, K$}
\EndProcedure
\end{algorithmic}
\end{algorithm}

Once the coded packets targeted to each user are determined, the next step is to design the physical layer coding scheme to deliver these packets over the Gaussian BC. Given $\textbf{d}$, we generate $K$ codebooks, where the codebook $k \in [K]$ is designed to deliver the contents $\tilde{Q}_k$ to user $k$. The $k$-th codebook consists of $2^{n{R_{{\textbf{d}},t,k}}}$ i.i.d. Gaussian codewords $x_{k}^n \left( \textbf{W},\textbf{d} \right)$, generated according to the normal distribution $\mathcal{N} \left( 0,\alpha_k P^{\rm{C}}_{{\rm{UB}}}\left( R,M,\textbf{d} \right) \right)$, where $\alpha_k \ge 0$ and $\sum\nolimits_{i = 1}^K {{\alpha _i}}  = 1$. The transmission is performed through superposition coding; that is, the channel input is given by $\sum\nolimits_{k = 1}^K {x_{k}^n \left(\textbf{W},\textbf{d}, {\tilde q}_k \right)}$, where ${\tilde q}_k \in \left[ 2^{n{R^{\rm{C}}_{{\textbf{d}},t,k}}} \right]$. User $k$ decodes codewords $x_{1}^n \left(\textbf{W},\textbf{d} \right), ..., x_{k}^n \left(\textbf{W},\textbf{d} \right)$ by successive decoding, considering all the codewords in higher levels as noise. If all the $k$ codewords are decoded successfully, user $k \in [K]$ can recover contents ${\tilde Q}_1, ..., {\tilde Q}_k$. Accordingly, any user $k \in [K]$ will receive all the coded contents targeted for it except those that are not intended for at least one user in ${\cal U}_{{\textbf{d}}}$ (which have not been delivered); that is, user $k$ receives all the coded contents
\begin{equation}\label{CodedContentsCentIntUserkContentk}
\bigcup\nolimits_{l: {\mathcal{C}_l^{t+1}} \cap {\cal U}_{{\textbf{d}}} \ne \emptyset, k \in {\mathcal{C}_l^{t+1}}} Q_{\mathcal{C}_l^{t+1}}.
\end{equation}
Thanks to the degradedness of the underlying Gaussian BC, it can also obtain all the coded contents targeted for users in $[k-1]$, which are also intended for at least one user in ${\cal U}_{{\textbf{d}}}$, i.e., all the coded contents in
\begin{equation}\label{CodedContentsCentIntUserkContent1_k}
\bigcup\nolimits_{l: {\mathcal{C}_l^{t+1}} \cap {\cal U}_{{\textbf{d}}} \ne \emptyset, [k-1] \cap {\mathcal{C}_l^{t+1}}\ne \emptyset} Q_{\mathcal{C}_l^{t+1}}.
\end{equation}
Note that, if $k \in {\cal U}_{{\textbf{d}}}$, \eqref{CodedContentsCentIntUserkContentk} reduces to $\bigcup\nolimits_{l:k \in {\mathcal{C}_l^{t+1}}} Q_{\mathcal{C}_l^{t+1}}$, which shows that the demand of user $k \in {\cal U}_{{\textbf{d}}}$ is satisfied. Next, we illustrate that the users in $[K] \backslash {\cal U}_{{\textbf{d}}}$ can obtain their requests without being delivered any extra messages. Given any set of users ${\mathcal{C}_l^{t+1}}$ such that ${\mathcal{C}_l^{t+1}} \cap {\cal U}_{{\textbf{d}}} = \emptyset$, we need to show that every user in ${\mathcal{C}_l^{t+1}}$ can decode all the coded messages $Q_{\mathcal B \backslash \cal G}$, $\forall \mathcal G \in {\mathcal {G}_{\cal B}}\backslash {\mathcal U_{{\textbf{d}}}}$, where $\mathcal B ={\mathcal{C}_l^{t+1}} \cup {\cal U}_{{\textbf{d}}}$. In this case, they can also decode $Q_{\mathcal{C}_l^{t+1}}$ through \eqref{LemmaEqConcLast}. 

Assume that there exists a set of users ${\mathcal{C}_l^{t+1}}$ such that ${\mathcal{C}_l^{t+1}} \cap {\cal U}_{{\textbf{d}}} = \emptyset$; set $\mathcal B ={\mathcal{C}_l^{t+1}} \cup {\cal U}_{{\textbf{d}}}$. According to \eqref{ResultsLemma2}, there is at least one user in $\mathcal{U}_\textbf{d}$ in any set of users $\mathcal B \backslash \cal G$, $\forall \mathcal G \in {\mathcal {G}_{\cal B}}\backslash {\mathcal U_{{\textbf{d}}}}$. Thus, all the coded messages $Q_{\mathcal B \backslash \cal G}$ have been delivered by the proposed delivery scheme. Remember that, for each user $k \in {\mathcal{C}_l^{t+1}}$ and $\forall \mathcal G \in {\mathcal {G}_{\cal B}}\backslash {\mathcal U_{{\textbf{d}}}}$, either $k \in \mathcal B \backslash \cal G$ or $\exists k' \in \mathcal B \backslash \cal G$, where $d_{k'}=d_k$ and $k' \in \mathcal{U}_\textbf{d}$, i.e., $h^2_{k'} \le h^2_{k}$. If $k \in \mathcal B \backslash \cal G$, since $\mathcal B \backslash \cal G \cap {\mathcal U_{{\textbf{d}}}} \ne \emptyset$, according to \eqref{CodedContentsCentIntUserkContentk}, user $k$ can obtain $Q_{\mathcal B \backslash \cal G}$. If $\exists k' \in \mathcal B \backslash \cal G$ with $d_{k'} = d_{k}$ and $k' \in {\mathcal U_{{\textbf{d}}}}$, then user $k'$ can decode $Q_{\mathcal B \backslash \cal G}$, and since $h^2_{k'} \le h^2_{k}$, user $k$ can also decode $Q_{\mathcal B \backslash \cal G}$. Thus, each user $k \in {\mathcal{C}_l^{t+1}}$ can decode $Q_{\mathcal{C}_l^{t+1}}$ successfully, for any set of users ${\mathcal{C}_l^{t+1}}$ that satisfies ${\mathcal{C}_l^{t+1}} \cap {\cal U}_{{\textbf{d}}} = \emptyset$. This fact confirms that the demands of all the users in $[K] \backslash {\cal U}_{{\textbf{d}}}$ are satisfied.

We remark here that, with the proposed coded delivery scheme, the coded packets targeted to user $k$ are delivered only to users in $[k]$, and not to the any of the users in $[k+1:K]$, for $k \in [K]$.
Next, we provide an example to illustrate the proposed joint cache and channel coding scheme.

\underline{\textbf{Example:}} Consider $K=5$ users, each equipped with a cache of normalized size $M = N/5$, i.e., $t=1$. File $W_i$ is partitioned into 5 equal-length subfiles $W_{i,\{1 \}}$, $\ldots$, $W_{i,\{5 \}}$, each of which is of rate $R/5$, for $i \in [N]$. Cache contents of user $k$ after the placement phase is given by
\begin{align}\label{CacheContentsExmp}
{U_k} & = \bigcup\nolimits_{i \in \left[ N \right]} W_{i,\left\{ k \right\}}, \quad k \in [5].
\end{align}
Assuming that $N \ge 3$, let the user demands be $\textbf{d} = \left( 1,2,1,1,3 \right)$, where we have $N_{{\textbf{d}}} =3$, and $\mathcal U_{{\textbf{d}}} = \{ 1,2,5\}$. We generate the following coded packets:
\begin{subequations}
\label{CodedPacketGenExmp}
\begin{align}\label{CodedPacketGenExmp1}
Q_{\{1,2\}} & = W_{1,\{2 \}} \oplus W_{2,\{1 \}},\\
Q_{\{1,3\}} & = W_{1,\{3 \}} \oplus W_{1,\{1 \}} ,\label{CodedPacketGenExmp2}\\
Q_{\{1,4\}} & = W_{1,\{4 \}} \oplus W_{1,\{1 \}},
\label{CodedPacketGenExmp3}\\
Q_{\{1,5\}} & = W_{1,\{5 \}} \oplus W_{3,\{1 \}},
\label{CodedPacketGenExmp4}\\
Q_{\{2,3\}} & = W_{2,\{3 \}} \oplus W_{1,\{2 \}},
\label{CodedPacketGenExmp5}\\
Q_{\{2,4\}} & = W_{2,\{4 \}} \oplus W_{1,\{2 \}},
\label{CodedPacketGenExmp6}\\
Q_{\{2,5\}} & = W_{2,\{5 \}} \oplus W_{3,\{2 \}},
\label{CodedPacketGenExmp7}\\
Q_{\{3,5\}} & = W_{1,\{5 \}} \oplus W_{3,\{3 \}},
\label{CodedPacketGenExmp8}\\
Q_{\{4,5\}} & = W_{1,\{5 \}} \oplus W_{3,\{4 \}}.
\label{CodedPacketGenExmp9}
\end{align}
\end{subequations}
The coded contents ${\tilde Q}_k$ targeted to user $k$, $k \in [5]$, are:
\begin{subequations}
\label{QktildeExmp}
\begin{align}\label{QktildeExmp1}
{\tilde Q}_1 & = Q_{\{1,2\}}, Q_{\{1,3\}}, Q_{\{1,4\}}, Q_{\{1,5\}},\\
{\tilde Q}_2 & = Q_{\{2,3\}}, Q_{\{2,4\}}, Q_{\{2,5\}} ,\label{QktildeExmp2}
\end{align}

\begin{align}
{\tilde Q}_3 & = Q_{\{3,5\}},
\label{QktildeExmp3}\\
{\tilde Q}_4 & = Q_{\{4,5\}},
\label{QktildeExmp4}\\
{\tilde Q}_5 & = \textbf{0}.
\label{QktildeExmp5}
\end{align}
\end{subequations}
We perform superposition coding to deliver these contents, and send $\sum\nolimits_{k = 1}^4 {x_{k}^n \left(\textbf{W},\textbf{d}, {\tilde q}_k \right)}$, where ${\tilde q}_1 \in \left[ 2^{4nR/5} \right]$, ${\tilde q}_2 \in \left[ 2^{3nR/5} \right]$, ${\tilde q}_3 \in \left[ 2^{nR/5} \right]$, and ${\tilde q}_4 \in \left[ 2^{nR/5} \right]$. User $k$, $k \in [5]$, decodes the codewords $x_{1}^n \left(\textbf{W},\textbf{d} \right), ..., x_{k}^n \left(\textbf{W},\textbf{d} \right)$ by successive decoding, considering the higher level codewords as noise. If successful, user $k$ can recover ${\tilde Q}_1, \ldots, {\tilde Q}_k$, $k \in [5]$. Now, note that user 1 can recover $W_1$ from ${\tilde Q}_1$ together with its cache contents; user 2 can decode $W_2$ having received the coded packets $Q_{\{1,2\}}$, $Q_{\{2,3\}}$, $Q_{\{2,4\}}$, and $Q_{\{2,5\}}$ along with its cache contents; coded packets $Q_{\{1,5\}}$, $Q_{\{2,5\}}$, $Q_{\{3,5\}}$, and $Q_{\{4,5\}}$, and its cache contents $U_5$ enable user 5 to decode $W_3$; user 3 can directly obtain $Q_{\{1,3\}}$, $Q_{\{2,3\}}$, and $Q_{\{3,5\}}$, and it can generate $Q_{\{3,4\}}$ as follows:
\begin{align}\label{Q34Exmpl}
Q_{\{3,4\}} = Q_{\{1,3\}} \oplus Q_{\{1,4\}},    
\end{align}
so it can decode $W_1$ together with its cache contents. Note that both $Q_{\{1,3\}}$ and $Q_{\{1,4\}}$ are delivered within ${\tilde Q}_1$, which is decoded by user 3. Similarly, user 4 will be delivered $Q_{\{1,4\}}$, $Q_{\{2,4\}}$, and $Q_{\{4,5\}}$, and it can also recover $Q_{\{3,4\}}$ through \eqref{Q34Exmpl}, and together with its cache contents it can recover $W_1$.   
\qed

\subsection{Non-integer $t$ Values}\label{CentralizedGeneralValues}
Here we extend the proposed scheme to non-integer $t$ values. We divide the whole database as well as the cache memory of the users into two, such that the corresponding $t$ parameters for both parts are integer. This way we can employ the placement and delivery schemes introduced in Section \ref{CentralizedIntegerValues} for each part separately. 

\underline{\textbf{Placement phase:}} Each file $W_i$, for $i=1, ..., N$, is divided into two non-overlapping subfiles, $W_i^1$ of rate $R_1$ and $W_i^2$ of rate $R_2$. We set
\begin{subequations}
\label{NonintegerRateEachSubfile}
\begin{align}\label{NonintegerRateEachSubfile1}
R_1 &= \left( \left\lfloor t \right\rfloor +1-t \right)R,\\
R_2 &= \left( t - \left\lfloor t \right\rfloor \right)R,\label{NonintegerRateEachSubfile2}
\end{align}
\end{subequations}
such that $R_1+R_2=R$. For subfiles $\left\{ W_1^1, \dots, W_N^1 \right\}$, we perform the placement phase proposed in Section \ref{CentralizedIntegerValues} corresponding to the normalized global cache capacity $\left\lfloor t \right\rfloor$, which requires a cache capacity of $n \left( \left\lfloor t \right\rfloor N/K \right)R_1 \mbox{ bits}$. While, for subfiles $\left\{ W_1^2, \dots, W_N^2 \right\}$, we perform the placement phase proposed in Section \ref{CentralizedIntegerValues} corresponding to the normalized global cache capacity $\left\lfloor t \right\rfloor +1$, which requires a cache capacity of $n \left( \left( \left\lfloor t \right\rfloor + 1 \right) N/K \right)R_2 \mbox{ bits}$. By summing up, the total cache capacity is found to be $nMR$ bits, which shows that the cache capacity constraint is satisfied.

\underline{\textbf{Delivery phase:}} For any demand vector $\textbf{d}$, we perform the proposed coded delivery phase in Section \ref{CentralizedIntegerValues} corresponding to the normalized global cache capacity $\left\lfloor t \right\rfloor$ to deliver the missing bits of subfiles $\left\{ W_1^1, \dots, W_N^1 \right\}$ to the intended users. Moreover, the missing bits of subfiles $\left\{ W_1^2, \dots, W_N^2 \right\}$ are delivered to the intended users by performing the coded delivery scheme proposed in Section \ref{CentralizedIntegerValues} corresponding to the normalized global cache capacity $\left\lfloor t \right\rfloor + 1$.  

For an arbitrary demand vector $\textbf{d} = \left( d_1, \dots, d_K\right)$, we define
\begin{equation}\label{DefCodedDeliveredContentCentralized12}
Q^i_{\mathcal{C}_l^{{t_i}+1}} \buildrel \Delta \over = {{\bigoplus}_{k \in \mathcal{C}_l^{{t_i}+1}} {W^i_{{d_k},\mathcal{C}_l^{{t_i}+1}\backslash \{ k\} }}}, \quad \mbox{for $i=1, 2$},  
\end{equation}
where $t_1 \buildrel \Delta \over = \left\lfloor t \right\rfloor$ and $t_2 \buildrel \Delta \over = \left\lfloor t \right\rfloor + 1$. According to Algorithm \ref{CentralizedSchemeAlg}, by performing the centralized caching and coded delivery scheme proposed in Section \ref{CentralizedIntegerValues} for the normalized global cache capacity $\left\lfloor t \right\rfloor$ to serve the subfiles $\left\{ W_1^1, \dots, W_N^1 \right\}$, each of rate $R_1$, user $k \in [K]$ should receive
\begin{equation}\label{CodedContentsUserkNotinLeadersCentralized1}
{\tilde Q}^1_k = \bigcup\nolimits_{l: {\mathcal{C}_l^{\left\lfloor t \right\rfloor +1}} \cap {\cal U}_{{\textbf{d}}} \ne \emptyset, {\mathcal{C}_l^{\left\lfloor t \right\rfloor +1}} \cap [k-1] = \emptyset, k \in {\mathcal{C}_l^{\left\lfloor t \right\rfloor +1}}} Q^1_{\mathcal{C}_l^{\left\lfloor t \right\rfloor +1}}, 
\end{equation}
of the following total rate obtained according to \eqref{TotalRateEachUserCentralized}
\begin{align}\label{TotalRateEachUserCentralizedNonInteger1}
{R_{{\textbf{d}},\left\lfloor t \right\rfloor,k}} = 
\begin{cases} 
\frac{\binom{K-k}{\left\lfloor t \right\rfloor}}{\binom{K}{\left\lfloor t \right\rfloor}}R_1, &\mbox{if $k \in \mathcal{U}_{{\textbf{d}}}$},\\
\frac{\binom{K-k}{\left\lfloor t \right\rfloor}-\binom{K-k-N_{{\textbf{d}},k}}{\left\lfloor t \right\rfloor}}{\binom{K}{\left\lfloor t \right\rfloor}}R_1, &\mbox{otherwise}.
\end{cases}
\end{align}
Similarly, to serve the subfiles $\left\{ W_1^2, \dots, W_N^2 \right\}$, each of rate $R_2$, user $k \in [K]$ should receive 
\begin{equation}\label{CodedContentsUserkNotinLeadersCentralized2}
{\tilde Q}^2_k = \bigcup\nolimits_{l: {\mathcal{C}_l^{\left\lfloor t \right\rfloor + 2}} \cap {\cal U}_{{\textbf{d}}} \ne \emptyset, {\mathcal{C}_l^{\left\lfloor t \right\rfloor + 2}} \cap [k-1] = \emptyset, k \in {\mathcal{C}_l^{\left\lfloor t \right\rfloor + 2}}} Q^2_{\mathcal{C}_l^{\left\lfloor t \right\rfloor + 2}}, 
\end{equation}
of total rate 
\begin{align}\label{TotalRateEachUserCentralizedNonInteger2}
{R_{{\textbf{d}},\left\lfloor t \right\rfloor + 1 ,k}} = 
\begin{cases} 
\frac{\binom{K-k}{\left\lfloor t \right\rfloor + 1}}{\binom{K}{\left\lfloor t \right\rfloor + 1}}R_2, &\mbox{if $k \in \mathcal{U}_{{\textbf{d}}}$},\\
\frac{\binom{K-k}{\left\lfloor t \right\rfloor + 1}-\binom{K-k-N_{{\textbf{d}},k}}{\left\lfloor t \right\rfloor + 1}}{\binom{K}{\left\lfloor t \right\rfloor + 1}}R_2, &\mbox{otherwise}.
\end{cases}
\end{align}
Thus, ${\tilde Q}_k \buildrel \Delta \over = \left( {\tilde Q}^1_k,{\tilde Q}^2_k \right)$, at a total rate of ${R^{\rm{C}}_{{\textbf{d}},k}} \buildrel \Delta \over = {R_{{\textbf{d}},\left\lfloor t \right\rfloor,k}} + {R_{{\textbf{d}},\left\lfloor t \right\rfloor + 1 ,k}}$ given by
\begin{align}\label{TotalRateUserkCentralized}
{R^{\rm{C}}_{{\textbf{d}},k}} =
\begin{cases} 
\frac{\binom{K-k}{\left\lfloor t \right\rfloor}}{\binom{K}{\left\lfloor t \right\rfloor}}\left( \left\lfloor t \right\rfloor +1 -t \right)R+\\
\qquad \frac{\binom{K-k}{\left\lfloor t \right\rfloor+1}}{\binom{K}{\left\lfloor t \right\rfloor+1}}\left(t- \left\lfloor t \right\rfloor \right)R, &\mbox{if $k \in \mathcal{U}_{{\textbf{d}}}$},\\
\frac{\binom{K-k}{\left\lfloor t \right\rfloor}-\binom{K-k-N_{{\textbf{d}},k}}{\left\lfloor t \right\rfloor}}{\binom{K}{\left\lfloor t \right\rfloor}}\left( \left\lfloor t \right\rfloor +1 -t \right)R +\\
\qquad \frac{\binom{K-k}{\left\lfloor t \right\rfloor+1}-\binom{K-k-N_{{\textbf{d}},k}}{\left\lfloor t \right\rfloor+1}}{\binom{K}{\left\lfloor t \right\rfloor+1}}\left(t- \left\lfloor t \right\rfloor \right)R, &\mbox{otherwise}
\end{cases}
\end{align}
is delivered to user $k$, for $k=1, ..., K$.

\subsection{Transmit Power Analysis}\label{PowerAnalysisCentralized}


In the proposed delivery scheme, user $k \in [K]$ decodes codewords $x_{1}^n \left( \textbf{W},\textbf{d} \right), \dots, x_{k}^n \left( \textbf{W},\textbf{d} \right)$ successively considering all the other codewords in higher levels as noise. User $k$ can decode its intended message successfully if, for $k=1, ..., K$, 
\begin{equation}\label{RateRegionCentNonInt}
{R^{\rm{C}}_{\textbf{d},k}} \le \frac{1}{2}{\log _2}\left( {1 + \frac{{{\alpha _k} h_k^2 P^{\rm{C}}_{{\rm{UB}}}\left( R,M,\textbf{d} \right)}}{{h_k^2 \sum\nolimits_{i = k + 1}^K {{\alpha _i}P^{\rm{C}}_{{\rm{UB}}}\left( R,M,\textbf{d} \right)}  + 1}}} \right).
\end{equation}
From Proposition \ref{GeneralPowerAllocationLemma}, the corresponding minimum required power is given by
\begin{equation}\label{RequiredPowerEachDemandDeliveryCentralized}
P^{\rm{C}}_{{\rm{UB}}}\left( R,M,\textbf{d} \right) \buildrel \Delta \over = \sum\nolimits_{i = 1}^K \left( \frac{2^{2{R^{\rm{C}}_{{\textbf{d}},i}}} - 1}{h_i^2} \right)\prod\nolimits_{j = 1}^{i - 1} {{2^{2{R^{\rm{C}}_{{\textbf{d}},j}}}}} .
\end{equation}
Thus, the average power-memory trade-off of the proposed achievable scheme is given by ${{\rm E}_{\textbf{d}}}\left[ P^{\rm{C}}_{\rm{UB}}\left( R,M,\textbf{d} \right) \right] = \bar P^{\rm{C}}_{\rm{UB}}(R,M)$, while the peak power-memory trade-off is $\hat P^{\rm{C}}_{\rm{UB}} \left( R,M \right) = \mathop {\max }\limits_{\textbf{d}} \left\{ {P^{\rm{C}}_{\rm{UB}}\left( {R,M,{\textbf{d}}} \right)} \right\}$, as stated in Theorem \ref{UpperboundPowerMemoryTheoremCentralized}, where the demands are distributed uniformly.

Observe that, for demand vectors with the same ${\cal U}_{{\textbf{d}}}$ set, the required power $P^{\rm{C}}_{\rm{UB}}\left( R,M,\textbf{d} \right)$ is the same. Let $\mathcal{D}_{\mathcal{U}_{\textbf{d}}}$ denote set of all demand vectors with the same $\mathcal{U}_{\textbf{d}}$ set. We define $P^{\rm{C}}_{\rm{UB}}\left( R,M,\mathcal{D}_{\mathcal{U}_{\textbf{d}}} \right)$ as the required power $P^{\rm{C}}_{\rm{UB}}\left( R,M,\textbf{d} \right)$ for any demand vector $\textbf{d} \in \mathcal{D}_{\mathcal{U}_{\textbf{d}}}$. Thus, we have
\begin{equation}\label{RequiredPowerEachDemandDeliverySameSetCentralized}
\hat P^{\rm{C}}_{\rm{UB}} \left( R,M \right) = \mathop {\max }\limits_{{\cal U}_\textbf{d}} \left\{ {P^{\rm{C}}_{\rm{UB}}\left( {R,M,\mathcal{D}_{\mathcal{U}_{\textbf{d}}}} \right)} \right\}.
\end{equation}
It is shown in Appendix \ref{ProofWorstCaseDemandCentralized} that the worst-case demand combination for the proposed centralized caching scheme happens when the first $\min\{N,K\}$ users; that is, the users with the worst channel gains, request distinct files, i.e., when $\mathcal{U}_{\textbf{d}} = \left[ \min\{N,K\} \right]$, and $\hat P^{\rm{C}}_{\rm{UB}} \left( R,M \right)$ is given by \eqref{AchievablePowerMemoryTheoremWorstCaseCentralized}.
\end{proof}

\section{Decentralized Caching and Delivery}\label{ProposedSchemeDecentralized}

Here we extend our centralized caching scheme to the decentralized caching. The corresponding average and peak power-memory trade-offs are given in the following theorem.

\begin{theorem}\label{UpperboundPowerMemoryTheoremDecentralized}
For decentralized caching followed by delivery over a Gaussian BC, we have
\begin{subequations}
\label{AchievablePowerMemoryTheoremDemandDecentralized}
\begin{align}\label{AchievablePowerMemoryTheoremDemandDecentralized1}
& \bar P^*(R,M) \le \bar P^{\rm{D}}_{\rm{UB}}(R,M) \buildrel \Delta \over =\nonumber\\
&\qquad \quad \frac{1}{N^K} \sum\limits_{\emph{\textbf{d}} \in \left[ N \right]^K}  \left[ \sum\limits_{i = 1}^K \left( \frac{{{2^{2{R^{\rm{D}}_{\emph{\textbf{d}},i}}}} - 1}}{h_i^2} \right)\prod\limits_{j = 1}^{i - 1} {{2^{2{R^{\rm{D}}_{\emph{\textbf{d}},j}}}}}  \right] ,
\end{align}
where, for $k=1, ..., K$, 
\begin{align}\label{AchievablePowerMemoryTheoremDemandDecentralized2}
{R^{\rm{D}}_{\emph{\textbf{d}},k}} \buildrel \Delta \over = 
\begin{cases} 
{\left( {1 - \frac{M}{N}} \right)^k}R, &\mbox{if $k \in \mathcal{U}_{\emph{\textbf{d}}}$},\\
{\left( {1 - \frac{M}{N}} \right)^k}\left( {1 - {{\left( {1 - \frac{M}{N}} \right)}^{{N_{\emph{\textbf{d}},k}}}}} \right)R, &\mbox{otherwise},
\end{cases}
\end{align}
\end{subequations}
and 
\begin{align}\label{AchievablePowerMemoryTheoremWorstCaseTheoremDecentralized}
\hat P^*(R,M) \le \hat P^{\rm{D}}_{\rm{UB}}(R,M) \buildrel \Delta \over = & \sum\limits_{i = 1}^{\min\{N,K\}} \left( \frac{{{2^{2R{\left( {1 - \frac{M}{N}} \right)^i}}} - 1}}{h_i^2} \right) \nonumber\\
& 2^{2R\left( {\frac{N}{M} - 1} \right)\left( {1 - {{\left( {1 - \frac{M}{N}} \right)}^{i - 1}}} \right)} .
\end{align}
\end{theorem}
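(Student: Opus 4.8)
The plan is to mirror the structure of the proof of Theorem \ref{UpperboundPowerMemoryTheoremCentralized}, replacing the deterministic centralized placement by a random decentralized one and then reusing the leader-based delivery argument and the power formula of Proposition \ref{GeneralPowerAllocationLemma}. In the placement phase each user $k$ caches each bit of each file $W_i$ independently with probability $M/N$; accordingly, for every subset $\mathcal{S} \subseteq [K]$ I let $W_{i,\mathcal{S}}$ denote the bits of $W_i$ cached by exactly the users in $\mathcal{S}$. By the law of large numbers, as $n \to \infty$ the rate of $W_{i,\mathcal{S}}$ concentrates around $\left( M/N \right)^{|\mathcal{S}|}\left( 1-M/N \right)^{K-|\mathcal{S}|}R$, and the per-user cache constraint holds with high probability. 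The essential difference from the centralized case is that subfiles now exist for every cardinality $|\mathcal{S}| \in \{0,\dots,K\}$ rather than for the single value $t$, so the coded delivery is run simultaneously at every cardinality level.

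For the delivery phase I would, for each nonempty $\mathcal{S}$, form the coded message $Q_{\mathcal{S}} = \bigoplus_{k \in \mathcal{S}} W_{d_k,\mathcal{S}\setminus\{k\}}$ exactly as in \eqref{DefCodedDeliveredContentCentralized}, zero-padding the constituent subfiles to a common length (a vanishing cost as $n \to \infty$). Since within each fixed cardinality the messages $Q_{\mathcal{S}}$ satisfy the same linear dependence as in \cite[Lemma 1]{YuMaddahAliAvestimehrExact}, Remark \ref{RemAfterLemma} applies level by level: it suffices to transmit only those $Q_{\mathcal{S}}$ with $\mathcal{S} \cap \mathcal{U}_{\textbf{d}} \neq \emptyset$, to assign $\tilde{Q}_k$ to user $k$ as in \eqref{CodedContentsUserkNotinLeadersCentralized}, and to recover the remaining messages through \eqref{LemmaEqConcLast}. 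The decoding argument, namely successive decoding up to level $k$ and recovery of the undelivered messages by every user in a non-leader set, then carries over verbatim from Theorem \ref{UpperboundPowerMemoryTheoremCentralized}.

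The key new computation is the total rate $R^{\rm{D}}_{\textbf{d},k}$ targeted to user $k$. Writing $q \buildrel \Delta \over = 1 - M/N$, the messages newly delivered at user $k$ are the $Q_{\mathcal{S}}$ with $k = \min \mathcal{S}$, $\mathcal{S} \subseteq [k:K]$, and $\mathcal{S} \cap \mathcal{U}_{\textbf{d}} \neq \emptyset$; writing $\mathcal{S} = \{k\} \cup T$ with $T \subseteq [k+1:K]$, the rate of $Q_{\mathcal{S}}$ is $\left( M/N \right)^{|T|} q^{K-|T|} R$. For a leader $k \in \mathcal{U}_{\textbf{d}}$ the condition $\mathcal{S} \cap \mathcal{U}_{\textbf{d}} \neq \emptyset$ is automatic, so summing over all $T \subseteq [k+1:K]$ and invoking the binomial identity $\left( M/N + q \right)^{K-k} = 1$ collapses the sum to $R^{\rm{D}}_{\textbf{d},k} = q^k R$. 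For a non-leader $k$ one subtracts the $Q_{\mathcal{S}}$ with $\mathcal{S} \cap \mathcal{U}_{\textbf{d}} = \emptyset$, i.e.\ $T \subseteq [k+1:K] \setminus \mathcal{U}_{\textbf{d}}$, a set of size $K-k-N_{\textbf{d},k}$; the same identity gives $q^{k+N_{\textbf{d},k}}R$ for this term, so $R^{\rm{D}}_{\textbf{d},k} = q^k\left( 1 - q^{N_{\textbf{d},k}} \right)R$, matching \eqref{AchievablePowerMemoryTheoremDemandDecentralized2}. Substituting these rates into \eqref{PowerSetGeneralTerm} of Proposition \ref{GeneralPowerAllocationLemma} yields the per-demand power $P^{\rm{D}}_{\rm{UB}}(R,M,\textbf{d})$, and averaging over the uniform $\textbf{d}$ gives \eqref{AchievablePowerMemoryTheoremDemandDecentralized1}.

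Finally, for the peak power I would argue, as in Appendix \ref{ProofWorstCaseDemandCentralized}, that the worst case occurs when the $\min\{N,K\}$ weakest users request distinct files, i.e.\ $\mathcal{U}_{\textbf{d}} = \left[ \min\{N,K\} \right]$; then $R^{\rm{D}}_{\textbf{d},k} = q^k R$ for $k \le \min\{N,K\}$ and $R^{\rm{D}}_{\textbf{d},k} = 0$ otherwise. Plugging into \eqref{PowerSetGeneralTerm} and simplifying the product $\prod_{j=1}^{i-1} 2^{2Rq^j} = 2^{2R \sum_{j=1}^{i-1} q^j}$ via the geometric sum $\sum_{j=1}^{i-1} q^j = \left( N/M - 1 \right)\left( 1 - q^{i-1} \right)$ produces \eqref{AchievablePowerMemoryTheoremWorstCaseTheoremDecentralized}. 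I expect the main obstacle to be precisely this worst-case identification: unlike the rate algebra, showing that the distinct-demand configuration dominates every other $\mathcal{U}_{\textbf{d}}$ requires a monotonicity argument over the power expression of Proposition \ref{GeneralPowerAllocationLemma}, ensuring that both increasing the number of leaders and shifting them toward the weaker (smaller $h_k^2$) users drive the power upward. This is the step I would push into a separate lemma or appendix, paralleling the centralized treatment.
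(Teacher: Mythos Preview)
Your proposal is correct and follows essentially the same approach as the paper: the same decentralized placement, the same leader-based delivery with $Q_{\mathcal{S}}$ transmitted only when $\mathcal{S}\cap\mathcal{U}_{\textbf{d}}\neq\emptyset$, the same binomial collapse to obtain $R^{\rm D}_{\textbf{d},k}$, the same invocation of Proposition~\ref{GeneralPowerAllocationLemma} for the per-demand power, and the same swap-based monotonicity argument (deferred to an appendix) to identify $\mathcal{U}_{\textbf{d}}=[\min\{N,K\}]$ as the worst case before the geometric-sum simplification. The only cosmetic difference is your bit-by-bit Bernoulli$(M/N)$ description of the placement versus the paper's ``$nMR/N$ random bits per file'' description, which are asymptotically equivalent.
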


\begin{proof}
The decentralized caching scheme to achieve the average and peak power-memory trade-offs outlined in Theorem \ref{UpperboundPowerMemoryTheoremDecentralized} is described in the following.

\subsection{Placement Phase}\label{ProposedSchemePlacement}
We perform decentralized uncoded cache placement \cite{MaddahAliDecentralized}, where each user caches $nMR/N$ random bits of each file of length $nR$ bits independently. Since there are a total of $N$ files, the cache capacity constraint is satisfied. The part of file $i$ cached exclusively by the users in set $\mathcal S \subset \left[ K \right]$ is denoted by $W_{i,\mathcal S}$, for $i=1,..., N$. For $n$ large enough, the rate of $W_{i,\mathcal S}$ can be approximated by ${\left( {\frac{M}{N}} \right)^{\left| \mathcal S \right|}}{\left( {1 - \frac{M}{N}} \right)^{K - \left| \mathcal S \right|}}R$. The cache contents at user $k$ is given by
\begin{equation}\label{CacheContentUserk}
{U_k} = \bigcup\nolimits_{i \in \left[ N \right]} {\bigcup\nolimits_{\mathcal{S} \subset \left[K\right]: k \in {\cal S}} {W_{i,{\cal S}}} }.
\end{equation}

\subsection{Delivery phase}\label{ProposedSchemeDelivery}
Consider any non-empty set of users $\mathcal S \subset \left[ K \right]$. For a demand vector ${\textbf{d}}$, by delivering the coded message $Q_{{\cal S}} = {{\bigoplus}_{k \in \mathcal{S}} {W_{{d_k},\mathcal{S}\backslash \{ k\} }}} $ of rate ${\left( {\frac{M}{N}} \right)^{\left| \mathcal S \right|-1}}{\left( {1 - \frac{M}{N}} \right)^{K - \left| \mathcal S \right|+1}}R $ to users in $\cal S$, each user $i \in {\cal S}$ can recover subfile $W_{{d_i},\mathcal{S}\backslash \{ i\} }$ since it has cached all the subfiles $W_{{d_j},\mathcal{S}\backslash \{ j\} }$, $\forall j \in {\cal S} \backslash \{i\}$. For each $k \in [K]$, delivering $\bigcup\nolimits_{\mathcal S \subset \left[ K \right]:k \in \mathcal {S}} Q_{\mathcal{S}}$ enables user $k$ to recover all the subfiles $W_{{d_k},\mathcal{S}\backslash \{k\}}$, $\forall \mathcal{S} \subset \left[ K \right]$ and $k \in \cal S$. The demand of user $k$, $k \in [K]$, is fully satisfied after receiving $\bigcup\nolimits_{\mathcal S \subset \left[ K \right]:k \in \mathcal {S}} Q_{\mathcal{S}}$ along with its cache contents.

Similarly to the proposed scheme for the centralized caching scenario, given a demand vector $\textbf{d}$, the delivery phase is designed such that only the coded messages $Q_{\cal S}$, $\forall \mathcal S \subset [K]$ that satisfy $\mathcal S \cap {\cal U}_{{\textbf{d}}} \ne \emptyset$, are delivered, and the remaining coded messages can be recovered through \eqref{LemmaEqConcLast}. To achieve this, for any such set $\cal S$ with $\mathcal S \cap {\cal U}_{{\textbf{d}}} \ne \emptyset$, the transmission power is adjusted such that the worst user in ${\cal S}$ can decode it; and so can all the other users in $\cal S$ due to the degradedness of the Gaussian BC. Therefore, the demand of every user in $\cal{U}_{\textbf{d}}$ is satisfied.

Note that in the centralized scenario described in Section \ref{CentralizedIntegerValues}, each coded packet $Q_{\mathcal{C}_l^{{t}+1}}$ is targeted for a $(t+1)$-element subset of users, where $t \in [0:K]$, for $l \in \left[ \binom{K}{t+1} \right]$. While, in the decentralized scenario, coded packets are targeted for any subset of users, i.e., for $(t+1)$-element subset of users, $\forall t \in [0:K]$. By applying a similar technique as the delivery phase outlined in Algorithm \ref{CentralizedSchemeAlg}, given a demand vector $\textbf{d}$, the following contents are targeted for user $k$, $k \in [K]$:
\begin{equation}\label{CodedContentsUserkNotinLeaders}
{\tilde Q}_k = \bigcup\nolimits_{\mathcal S \subset [k:K]:\mathcal S \cap {\cal U}_{{\textbf{d}}} \ne \emptyset,k \in {\cal S}} {{Q_{{\cal S}}}}, 
\end{equation}
which are equivalent to
\begin{equation}\label{CodedContentsUserkNotinLeadersEquivalent}
{\tilde Q}_k = \bigcup\nolimits_{\mathcal S \subset [k:K]:k \in {\cal S}} {{Q_{{\cal S}}}} - \bigcup\nolimits_{\mathcal S \subset [k:K]\backslash {\mathcal{U}_{\textbf{d},k}},k \in {\cal S}} {{Q_{{\cal S}}}}. 
\end{equation}
For each user $k$, $k \in [K]$, there are $\binom{K-k}{i}$ different $(i+1)$-element subsets of $[k:K]$, which include $k$, for $i \in [0:K-k]$. Thus, if $k \in \cal{U}_{\textbf{d}}$, from \eqref{CodedContentsUserkNotinLeadersEquivalent}, the total rate targeted for user $k$ is
\begin{align}\label{TotalRateIntendedUserkinLeaders}
R^{\rm{D}}_{\textbf{d},k} &= \sum\limits_{i = 0}^{K-k} {\binom{K-k}{i}{\left( {\frac{M}{N}} \right)}^{i }}{{\left( {1 - \frac{M}{N}} \right)}^{K - i }}R \nonumber\\
& = \left(1 - \frac{M}{N}\right)^kR.
\end{align}
On the other hand, for each user $k$, $k \in [K]$, there are $\binom{K-k-N_{{\textbf{d}},k}}{i}$ different $(i+1)$-element subsets of $[k:K] \backslash {\mathcal{U}_{\textbf{d},k}}$, which include $k$, for $i \in [0:K-k]$. Thus, if $k \notin \cal{U}_{\textbf{d}}$, from \eqref{CodedContentsUserkNotinLeadersEquivalent}, the total rate targeted for user $k \in [K]$ is given by
\begin{align}\label{TotalRateIntendedUserkNotinLeaders}
R^{\rm{D}}_{\textbf{d},k} =& \sum\limits_{i = 0}^{K-k} {\binom{K-k}{i}{\left( {\frac{M}{N}} \right)}^{i }}{{\left( {1 - \frac{M}{N}} \right)}^{K - i }}R\nonumber\\
& - \sum\limits_{i = 0}^{K-k-{N_{{\textbf{d}},k}}} {\binom{K-k-{N_{{\textbf{d}},k}}}{i}{\left( {\frac{M}{N}} \right)}^{i }}{{\left( {1 - \frac{M}{N}} \right)}^{K - i }}R\nonumber\\
= & {\left( {1 - \frac{M}{N}} \right)^k}\left( {1 - {{\left( {1 - \frac{M}{N}} \right)}^{{N_{{\textbf{d}},k}}}}} \right)R.
\end{align}
In total, the rate of contents targeted for user $k$, $k \in [K]$, is given by
\begin{align}\label{TotalRateEachUser}
{R^{\rm{D}}_{\textbf{d},k}} = 
\begin{cases} 
{\left( {1 - \frac{M}{N}} \right)^k}R, & \mbox{if $k \in \mathcal{U}_{{\textbf{d}}}$},\\
{\left( {1 - \frac{M}{N}} \right)^k}\left( {1 - {{\left( {1 - \frac{M}{N}} \right)}^{{N_{{\textbf{d}},k}}}}} \right)R, &\mbox{otherwise}.
\end{cases}
\end{align}

Given a demand vector $\textbf{d}$, the transmitted codeword $x^n \left( \textbf{W},\textbf{d} \right)$ is generated as the linear superposition of $K$ codewords $x_{1}^n \left( \textbf{W},\textbf{d} \right), ..., x_{K}^n \left( \textbf{W},\textbf{d} \right)$, each chosen from an independent codebook. Codebook $k$ consists of $2^{n{R^{\rm{D}}_{{\textbf{d}},k}}}$ i.i.d. codewords $x_{k}^n \left( \textbf{W},\textbf{d} \right)$ generated according to the normal distribution $\mathcal{N} \left( 0,\alpha_k P^{\rm{D}}_{{\rm{UB}}}\left( R,M,\textbf{d} \right) \right)$, where $\alpha_k \ge 0$ and $\sum\nolimits_{i = 1}^K {{\alpha _i}}  = 1$, which satisfy the power constraint, for $k=1, ..., K$.

User $k$, $k \in [K]$, decodes codewords $x_{1}^n \left( \textbf{W},\textbf{d} \right)$, $...$, $x_{k}^n \left( \textbf{W},\textbf{d} \right)$ through successive decoding, while considering all the codewords $x_{k+1}^n \left( \textbf{W},\textbf{d} \right), ..., x_{K}^n \left( \textbf{W},\textbf{d} \right)$ as noise. Thus, if user $k \in [K]$ can successfully decode all the $k$ channel codewords intended for it, it can then recover the contents ${\tilde Q}_1, ..., {\tilde Q}_k$, from which it can obtain all the coded contents $\bigcup\nolimits_{{{\mathcal S}} \cap {\cal U}_{{\textbf{d}}} \ne \emptyset, [k] \cap {{\mathcal S}} \ne \emptyset} Q_{{\mathcal S}}$. Accordingly, each user $k \in [K]$ can obtain all the coded contents targeted for it except those that are not intended for at least one user in ${\cal U}_{{\textbf{d}}}$ (which have not been delivered), i.e., all the coded contents
\begin{equation}\label{CodedContentsDecNewUserkDecentk}
\bigcup\nolimits_{{{\mathcal S}} \cap {\cal U}_{{\textbf{d}}} \ne \emptyset, k \in {{\mathcal S}}} Q_{{\mathcal S}}.
\end{equation}
It can further obtain all the coded contents targeted for users $[k-1]$, which are also intended for at least one user in ${\cal U}_{{\textbf{d}}}$, i.e., all the coded contents $\bigcup\nolimits_{{\mathcal{S}} \cap {\cal U}_{{\textbf{d}}} \ne \emptyset, [k-1] \cap {{\mathcal{S}}}\ne \emptyset} Q_{{\mathcal{S}}}$. Note that, if $k \in {\cal U}_{{\textbf{d}}}$, \eqref{CodedContentsDecNewUserkDecentk} reduces to $\bigcup\nolimits_{k \in {{\mathcal{S}}}} Q_{{\mathcal{S}}}$, which shows that the demand of each user $k \in {\cal U}_{{\textbf{d}}}$ is satisfied through the proposed delivery scheme. Next, we illustrate that the users in $[K] \backslash {\cal U}_{{\textbf{d}}}$ can obtain their requested files without being delivered any extra messages. Similarly to the centralized scenario, given any set of users ${\mathcal{S}}$ such that ${\mathcal{S}} \cap {\cal U}_{{\textbf{d}}} = \emptyset$, by setting $\mathcal B ={\mathcal{S}} \cup {\cal U}_{{\textbf{d}}}$, from the fact that, for each user $k \in {\mathcal{S}}$, $k \in \mathcal B \backslash \cal G$ or $k' \in \mathcal B \backslash \cal G$, where $k'<k$, it can be illustrated that every user in ${\mathcal{S}}$ can decode all coded contents $Q_{\mathcal B \backslash \cal G}$, $\forall \mathcal G \in {\mathcal {G}_{\cal B}}\backslash {\mathcal U_{{\textbf{d}}}}$. In this case, they all can also decode $Q_{\mathcal{S}}$ through \eqref{LemmaEqConcLast}.

\subsection{Transmit Power Analysis}\label{PowerAnalysisDecentralized}
For a demand vector $\textbf{d}$, user $k$ can decode the channel codewords up to level $k$ successfully, considering all the other codewords in higher levels as noise, if, for $k=1, ..., K$, 
\begin{equation}\label{RateRegionDecentNew}
{R^{\rm{D}}_{\textbf{d},k}} \le \frac{1}{2}{\log _2}\left( {1 + \frac{{{\alpha _k} h_k^2 P^{\rm{D}}_{{\rm{UB}}}\left( R,M,\textbf{d} \right)}}{{h_k^2 \sum\nolimits_{i = k + 1}^K {{\alpha _i}P^{\rm{D}}_{{\rm{UB}}}\left( R,M,\textbf{d} \right)}  + 1}}} \right).
\end{equation}
From Proposition \ref{GeneralPowerAllocationLemma}, the minimum required power is given by
\begin{equation}\label{RequiredPowerEachDemandDelivery}
P^{\rm{D}}_{\rm{UB}}\left( R,M,\textbf{d} \right) \buildrel \Delta \over = \sum\nolimits_{i = 1}^K {\left( \frac{{{2^{2{R^{\rm{D}}_{{\textbf{d}},i}}}} - 1}}{h_i^2} \right)\prod\nolimits_{j = 1}^{i - 1} {{2^{2{R^{\rm{D}}_{{\textbf{d}},j}}}}} }.
\end{equation}
Thus, the average power-memory trade-off for the proposed decentralized caching and coded delivery scheme is given by ${{\rm E}_{\textbf{d}}}\left[ P^{\rm{D}}_{\rm{UB}}\left( R,M,\textbf{d} \right) \right] = \bar P^{\rm{D}}_{\rm{UB}}(R,M)$ stated in Theorem \ref{UpperboundPowerMemoryTheoremDecentralized}.

With the proposed decentralized caching scheme, the peak power performance $\hat P^{\rm{D}} \left( R,M \right) = \mathop {\max }\limits_{\textbf{d}} \left\{ {P^{\rm{D}}_{\rm{UB}}\left( {R,M,{\textbf{d}}} \right)} \right\}$ can be achieved. Observe that, for demand vectors with the same set of users ${\cal U}_{{\textbf{d}}}$, the required power $P^{\rm{D}}_{\rm{UB}}\left( R,M,\textbf{d} \right)$ is the same. We define $P^{\rm{D}}_{\rm{UB}}\left( R,M,\mathcal{D}_{\mathcal{U}_{\textbf{d}}} \right)$ as the required power $P^{\rm{D}}_{\rm{UB}}\left( R,M,\textbf{d} \right)$ for any demand vector $\textbf{d} \in \mathcal{D}_{\mathcal{U}_{\textbf{d}}}$. Thus, we have
\begin{equation}\label{RequiredPowerEachDemandDeliverySameSet}
\hat P^{\rm{D}} \left( R,M \right) = \mathop {\max }\limits_{{\cal U}_\textbf{d}} \left\{ {P^{\rm{D}}_{\rm{UB}}\left( {R,M,\mathcal{D}_{\mathcal{U}_{\textbf{d}}}} \right)} \right\}.
\end{equation}
It is shown in Appendix \ref{ProofWorstCaseDemand} that the worst-case demand combination happens when $\mathcal{U}_{\textbf{d}} = \left[ \min\{N,K\} \right]$, and $\hat P^{\rm{D}}_{\rm{UB}} \left( R,M \right)$ is found as in \eqref{AchievablePowerMemoryTheoremWorstCaseTheoremDecentralized}. 
\end{proof}

\section{Lower Bound}\label{ProofSecondTheorem}

If we constrain the placement phase to uncoded caching, we can lower bound $\bar P^*\left( {R,M} \right)$ and ${{\hat P}^*}\left( {R,M} \right)$ as in the following theorem. The main challenge in deriving a lower bound for the cache-aided BC studied here is the lack of degradedness due to the presence of the caches. To derive a lower bound, we assume that the files requested by users in $[k-1]$ and their cache contents are provided to the other users. We then exploit the degradedness of the resultant system to lower bound the performance of the original model.

\begin{theorem}\label{LowerboundPowerMemoryTheorem}
In cache-aided content delivery over a Gaussian BC with uncoded cache placement phase, the minimum average power is lower bounded by $\bar P_{\rm{LB}}(R,M)$ defined as 
\begin{align}\label{LowerBoundAveragePowerMemoryTheorem}
\bar P_{\rm{LB}}(R,M) \buildrel \Delta \over = & {{\rm{E}}_{{\mathcal{U}_{\emph{\textbf{d}}}}}} \left[ \sum\nolimits_{i = 1}^{N_{\emph{\textbf{d}}}} \left( \frac{{{2^{2R\left( {1 - \min \left\{ {iM/N,1} \right\}} \right)}} - 1}}{h_{\pi_{\mathcal{U}_{\emph{\textbf{d}}}}(i)}^2} \right) \right.\nonumber\\
& \qquad \qquad \left. \prod\nolimits_{j = 1}^{i - 1} 2^{2R\left( {1 - \min \left\{ {jM/N,1} \right\}} \right)}  \right],
\end{align}
where ${{\rm{E}}_{{\mathcal{U}_{\emph{\textbf{d}}}}}} [\cdot]$ takes the expectation over all possible sets ${\mathcal{U}_{\emph{\textbf{d}}}}$, and $\pi_{\cal S}$ is a permutation over any subset of users ${\cal S} \subset [K]$, such that $h_{\pi_{\mathcal S}(1)}^2 \le h_{\pi_{\mathcal S}(2)}^2 \le \cdots \le h_{\pi_{\mathcal S}(\left| \mathcal S \right|)}^2$. The minimal required peak transmit power for the same system is lower bounded by $\hat P_{\rm{LB}}(R,M)$ defined as
\begin{align}\label{LowerBoundWorstCasePowerMemoryTheorem}
\hat P_{\rm{LB}}(R,M) \buildrel \Delta \over = & \mathop {\max }\limits_{\mathcal S \subset \left[ {\min \left\{ {N,K} \right\}} \right]} \left\{ \sum\limits_{i = 1}^{\left| {\cal S} \right|} \left( \frac{{{2^{2R\left( {1 - \min \left\{ {iM/N,1} \right\}} \right)}} - 1}}{h_{{\pi _S}(i)}^2} \right) \right. \nonumber\\
& \qquad \qquad \left. \prod\nolimits_{j = 1}^{i - 1} 2^{2R\left( {1 - \min \left\{ {jM/N,1} \right\}} \right)}  \right\}.
\end{align}
\end{theorem}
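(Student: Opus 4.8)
The plan is to establish a converse (lower bound) by reducing the cache-aided Gaussian BC—which is \emph{not} degraded because of the heterogeneous cache contents—to a genuinely degraded Gaussian BC for which the optimal power is known from Proposition~\ref{GeneralPowerAllocationLemma}. The key idea, flagged in the text preceding the theorem, is a genie-aided argument: to bound the power needed to serve the users in $\mathcal{U}_{\textbf{d}}$, I would process these $N_{\textbf{d}}$ distinct-demand users in order of increasing channel strength (via the permutation $\pi_{\mathcal{U}_{\textbf{d}}}$), and for the $i$-th such user hand it, as side information, the files requested by the $i-1$ weaker users in $\mathcal{U}_{\textbf{d}}$ together with \emph{all} of their cache contents. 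Once every user has been given the caches and decoded messages of the weaker users, the residual system is physically degraded, so successive decoding is optimal and Proposition~\ref{GeneralPowerAllocationLemma} applies directly.

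\textbf{Key steps in order.} First I would fix a demand vector $\textbf{d}$ and restrict attention to the leader set $\mathcal{U}_{\textbf{d}}$, since these $N_{\textbf{d}}$ users request distinct files and any scheme must satisfy them. Second, for the $i$-th leader (in the $\pi_{\mathcal{U}_{\textbf{d}}}$ ordering) I would give a genie the caches $U_{\pi(1)},\dots,U_{\pi(i-1)}$ and the requested files $W_{d_{\pi(1)}},\dots,W_{d_{\pi(i-1)}}$; this can only reduce the required power, so it yields a valid lower bound. Third, I would invoke the uncoded-placement assumption and a cut-set / entropy argument to quantify how much genuinely new information user $i$ must still decode: after removing the (at most) $iM/N$ fraction of $W_{d_{\pi(i)}}$ that is recoverable from the union of the $i$ caches, the user must reliably receive a message of rate at least $R\bigl(1-\min\{iM/N,1\}\bigr)$. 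The $\min\{\cdot,1\}$ captures saturation once $i$ caches already cover the whole file. Fourth, with these residual rate requirements in hand, the genie-aided channel is degraded and Proposition~\ref{GeneralPowerAllocationLemma} (with $R_i = R(1-\min\{iM/N,1\})$ and channel gains $h_{\pi_{\mathcal{U}_{\textbf{d}}}(i)}$) gives the bracketed power expression in \eqref{LowerBoundAveragePowerMemoryTheorem}. Averaging over the distribution of $\mathcal{U}_{\textbf{d}}$ yields the average-power bound; maximizing over leader sets $\mathcal{S}\subset[\min\{N,K\}]$ yields the peak-power bound \eqref{LowerBoundWorstCasePowerMemoryTheorem}.

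\textbf{Main obstacle.} The delicate step is the third one: rigorously lower-bounding the rate that leader $i$ must decode by $R\bigl(1-\min\{iM/N,1\}\bigr)$. Here I must show that the portion of the demanded file $W_{d_{\pi(i)}}$ that is \emph{not} already available from the $i$ caches has entropy at least $nR\bigl(1-\min\{iM/N,1\}\bigr)$. Because the placement is uncoded, each cached bit is a plain bit of some file, so the fraction of $W_{d_{\pi(i)}}$ stored across $i$ caches is at most $iM/N$ (each cache holds an $M/N$ fraction); the union therefore covers at most $\min\{iM/N,1\}$ of the file, and the complement must be conveyed over the channel. Making this count airtight—accounting for overlaps between caches and arguing that distinct demands prevent the genie's handed-over files from further shrinking the residual—is the crux, and it is precisely where the uncoded-placement hypothesis is indispensable. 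The remaining steps are then a direct application of degradedness and Proposition~\ref{GeneralPowerAllocationLemma}, followed by routine expectation and maximization over $\mathcal{U}_{\textbf{d}}$.
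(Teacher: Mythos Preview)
Your overall strategy—the genie-aided reduction to a degraded Gaussian BC followed by Proposition~\ref{GeneralPowerAllocationLemma}—matches the paper's. However, your third step, which you correctly identify as the crux, contains a genuine gap. You assert that with uncoded placement ``each cache holds an $M/N$ fraction'' of every file, so the union of $i$ caches covers at most $iM/N$ of $W_{d_{\pi(i)}}$. This is false: uncoded placement only means each cached bit is a raw file bit, not that the placement is symmetric across files. A valid uncoded placement could store an entire file in a single cache, in which case the residual rate for that particular demand is zero, not $R(1-iM/N)$. Consequently, the per-demand lower bound you claim on the residual rate simply does not hold, and the argument as written does not go through for either the average or the peak bound.

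The paper repairs exactly this point by an additional averaging-plus-convexity step that you are missing. For a fixed leader set $\mathcal{U}_{\textbf{d}}$, the paper averages the required power over the $D=\binom{N}{N_{\textbf{d}}}N_{\textbf{d}}!$ assignments of distinct files to the $N_{\textbf{d}}$ leaders. It then shows (Appendix~\ref{ProofConvexityOfF}) that the power expression $f(c_1,\dots,c_{N_{\textbf{d}}})$ in \eqref{LowerBoundPowerMutInfAppendix1} is jointly convex in the residual rates, so Jensen's inequality pushes the average inside $f$. By symmetry of the averaging, the $i$-th argument becomes $R-\frac{1}{nN}\sum_{k=1}^{N}I(W_k;U_{\pi(1)},\dots,U_{\pi(i)})$, and by independence of the files this is bounded below by $R-\frac{1}{nN}I(W_1,\dots,W_N;U_{\pi(1)},\dots,U_{\pi(i)})\ge R(1-\min\{iM/N,1\})$. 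This is where the $iM/N$ fraction legitimately appears—only after averaging over which file is demanded. For the peak bound the same averaging is permissible because $\hat{P}\ge P(\textbf{d})$ for every $\textbf{d}$, hence $\hat{P}$ dominates any average. Without this convexity-and-symmetry argument, your per-demand rate bound cannot be justified.
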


\begin{remark}\label{SimplificationAverageLowerBound}
Considering all possible demand vectors, we have a total of $2^{K-1}$ different $\mathcal{U}_{\emph{\textbf{d}}}$ sets. This follows from the fact that $N_{\emph{\textbf{d}}} \le K$ and $1 \in \mathcal{U}_{\emph{\textbf{d}}}$, $\forall \emph{\textbf{d}}$. For a given demand vector $\emph{\textbf{d}}$, let $\mathcal{U}_{\emph{\textbf{d}}} = \left\{ u_1, u_2, ..., u_{N_{\emph{\textbf{d}}}} \right\}$, where $1=u_1 \le u_2 \le \cdots \le u_{N_{\emph{\textbf{d}}}}$. The number of demand vectors with the same $\mathcal{U}_{\emph{\textbf{d}}}$ is given by
\begin{equation}\label{NumberDemandsInUd}
N_{\mathcal{U}_{\emph{\textbf{d}}}} \buildrel \Delta \over = \binom{N}{N_{\emph{\textbf{d}}}}{N_{\emph{\textbf{d}}}!} \left( \prod\nolimits_{j = 2}^{N_{\emph{\textbf{d}}}} j^{u_{j+1} - u_{j} -1} \right),
\end{equation}
where we define $u_{N_{\emph{\textbf{d}}}+1} \buildrel \Delta \over = K+1$. Thus, the lower bound in \eqref{LowerBoundAveragePowerMemoryTheorem} reduces to
\begin{align}\label{SimplificationAverageLowerBoundEq}
& \bar P_{\rm{LB}}(R,M) = \frac{1}{N^K} \sum\nolimits_{\mathcal U_{\emph{\textbf{d}}} \subset \left[ K \right], 1 \in \mathcal U_{\emph{\textbf{d}}}} N_{\mathcal{U}_{\emph{\textbf{d}}}}  \nonumber\\
& \sum\limits_{i = 1}^{N_{\emph{\textbf{d}}}}  \left( \frac{{{2^{2R\left( {1 - \min \left\{ {iM/N,1} \right\}} \right)}} - 1}}{h_{\pi_{\mathcal{U}_{\emph{\textbf{d}}}}(i)}^2} \right)\prod\limits_{j = 1}^{i - 1} 2^{2R\left( {1 - \min \left\{ {jM/N,1} \right\}} \right)}.
\end{align}
\end{remark}

\begin{proof}
For any given $\mathcal{U}_{\textbf{d}}$, let $\mathcal{D}_{\mathcal{U}_{\textbf{d}}}$ denote the set of all demand vectors with the same $\mathcal{U}_{\textbf{d}}$. The union $\bigcup\nolimits_{\mathcal{U}_{\textbf{d}}} \mathcal{D}_{\mathcal{U}_{\textbf{d}}}$ form the set of all possible demand vectors $[N]^K$. Therefore, the set of all possible demand vectors can be broken into classes $\mathcal{D}_{\mathcal{U}_{\textbf{d}}}$ based on the $\mathcal{U}_{\textbf{d}}$ set they correspond to. 

For any given $\mathcal{U}_{\textbf{d}}$, and an $(n,R,M)$ code as defined in \eqref{CachingFunction}, \eqref{DeliveryFunction} and \eqref{DecodingFunctionWeak} in Section \ref{SystemModel}, define the error probability as follows:
\begin{equation}\label{ErrorProbabilityFixedUd} 
{P_{{e}_{\mathcal{U}_{\textbf{d}}}}} \buildrel \Delta \over = \Pr \left\{ \bigcup\nolimits_{\textbf{d} \in \mathcal{D}_{\mathcal{U}_{\textbf{d}}}}{{\bigcup\nolimits_{k \in \mathcal{U}_{\textbf{d}}} {\left\{ {{{\hat W}_{{d_k}}} \ne {W_{{d_k}}}} \right\}} }} \right\}.
\end{equation}
Let ${P_{\mathcal{U}_{\textbf{d}}}\left( \textbf{d} \right)}$ denote the average power of the codeword this code generates for a demand vector $\textbf{d} \in \mathcal{D}_{\mathcal{U}_{\textbf{d}}}$. We say that an $\left( R,M,\bar P, \hat P \right)$ tuple is $\mathcal{U}_{\textbf{d}}$-\textit{achievable} if for every $\varepsilon > 0$, there exists an $(n,R,M)$ code with sufficiently large $n$, which satisfies ${P_{{e}_{\mathcal{U}_{\textbf{d}}}}} < \varepsilon$, ${{\rm E}_{\textbf{d}}}\left[ {P_{\mathcal{U}_{\textbf{d}}}\left( \textbf{d} \right)} \right] \le \bar P$, and ${P_{\mathcal{U}_{\textbf{d}}}\left( \textbf{d} \right)} \le \hat P$, $\forall \textbf{d} \in \mathcal{D}_{\mathcal{U}_{\textbf{d}}}$. We can also define ${\bar P}^*_{\mathcal{U}_{\textbf{d}}} \left( {R,M} \right)$ and ${\hat P}^*_{\mathcal{U}_{\textbf{d}}}\left( {R,M} \right)$ as in \eqref{AveragePowerMemoryTradeOff} and \eqref{PeakPowerMemoryTradeOff}, respectively, by considering $\mathcal{U}_{\textbf{d}}$-achievable codes.

We note from \eqref{ErrorProbabilityFixedUd} that, a $\mathcal{U}_{\textbf{d}}$-achievable code satisfies only the demands of the users in set $\mathcal{U}_{\textbf{d}}$. Accordingly, an achievable $\left( R,M,\bar P, \hat P \right)$ tuple is also $\mathcal{U}_{\textbf{d}}$-achievable, since $P_e \ge {P_{{e}_{\mathcal{U}_{\textbf{d}}}}}$, for any ${\mathcal{U}_{\textbf{d}}}$ set. Thus, lower bounds on ${\bar P}^*_{\mathcal{U}_{\textbf{d}}} \left( {R,M} \right)$ and ${\hat P}^*_{\mathcal{U}_{\textbf{d}}}\left( {R,M} \right)$ also serve as lower bounds on ${\bar P}^* \left( {R,M} \right)$ and ${\hat P}^* \left( {R,M} \right)$, respectively. In the following, we provide lower bounds on ${\bar P}^*_{\mathcal{U}_{\textbf{d}}} \left( {R,M} \right)$ and ${\hat P}^*_{\mathcal{U}_{\textbf{d}}}\left( {R,M} \right)$.

Let $\left( R,M,\bar P, \hat P \right)$ be any $\mathcal{U}_{\textbf{d}}$-achievable tuple. For uniformly distributed demands, we have
\begin{equation}\label{SmallerSubsetsPowerMemory}
\bar P \ge {{\rm E}_{\textbf{d}}}\left[ {P_{\mathcal{U}_{\textbf{d}}}\left( \textbf{d} \right)} \right] = {{\rm{E}}_{{\mathcal{U}_{\textbf{d}}}}}\left[ \frac{1}{N_{\mathcal{U}_{{\textbf{d}}}}} \sum\limits_{{\textbf{d}} \in {\mathcal{D}_{{\mathcal{U}_{\textbf{d}}}}}} {P_{\mathcal{U}_{\textbf{d}}}\left( \textbf{d} \right)} \right],
\end{equation}
where we used the fact that the probability of each demand vector in $\mathcal{D}_{\mathcal{U}_{\textbf{d}}}$ is equal. We divide the set of demand vectors $\mathcal{D}_{\mathcal{U}_{\textbf{d}}}$ into different subsets according to the demands of users in ${\mathcal{U}_{\textbf{d}}}$, where each subset consists of the demand vectors for which the demands of all the users in $\mathcal{U}_{\textbf{d}}$ are the same. Note that, there are $D \buildrel \Delta \over = \binom{N}{N_{\textbf{d}}}{N_{\textbf{d}}!}$ such subsets\footnote{For simplicity, we drop the dependence of $D$ on $N$ and $N_{\textbf{d}}$.}, denoted by $\mathcal{D}^l_{\mathcal{U}_{\textbf{d}}}$, for $l=1, ..., D$, i.e., $\mathcal{D}_{\mathcal{U}_{\textbf{d}}} = {\bigcup\nolimits_{l = 1}^{D} \mathcal{D}^l_{\mathcal{U}_{\textbf{d}}} }$. We note that the number of demand vectors in each ${{{\cal D}^l_{{{\cal U}_{\textbf{d}}}}}}$, denoted by $N'_{\mathcal{U}_{{\textbf{d}}}}$, is the same, and is given by
\begin{equation}\label{NumberDemandVectorsl}
N'_{\mathcal{U}_{{\textbf{d}}}} = \prod\nolimits_{j = 2}^{N_{{\textbf{d}}}} j^{u_{j+1} - u_{j} -1},
\end{equation}
where, we remind that $\mathcal{U}_{{\textbf{d}}} = \left\{ u_1, u_2, ..., u_{N_{{\textbf{d}}}} \right\}$, where $1=u_1 \le u_2 \le \cdots \le u_{N_{{\textbf{d}}}}$. Thus, we have $N_{\mathcal{U}_{{\textbf{d}}}} = D N'_{\mathcal{U}_{{\textbf{d}}}}$, and \eqref{SmallerSubsetsPowerMemory} can be rewritten as follows:
\begin{align}\label{SmallerSubsetsPowerMemory2Rewritten}
\bar P &\ge  {{\rm{E}}_{{\mathcal{U}_{\textbf{d}}}}}\left[ \frac{1}{N_{\mathcal{U}_{{\textbf{d}}}}} \sum\nolimits_{{\textbf{d}} \in {\mathcal{D}_{{\mathcal{U}_{\textbf{d}}}}}} {P_{\mathcal{U}_{\textbf{d}}}\left( \textbf{d} \right)} \right] \nonumber\\
& = {{\rm{E}}_{{\mathcal{U}_{\textbf{d}}}}}\left[ \frac{1}{D} \sum\nolimits_{l=1}^{D} \left( \frac{1}{N'_{\mathcal{U}_{{\textbf{d}}}}} \sum\nolimits_{{\textbf{d}} \in {\mathcal{D}^l_{{\mathcal{U}_{\textbf{d}}}}}} {P_{\mathcal{U}_{\textbf{d}}}\left( \textbf{d} \right)} \right) \right].
\end{align}
For any arbitrary demand vector $\textbf{d}^{l}_{\mathcal{U}_{\textbf{d}}} \in \mathcal{D}^l_{\mathcal{U}_{\textbf{d}}}$, for $l \in [D]$, it is proved in \cite[Lemma 14]{ShirinWiggerYenerCacheAssingment} that there exist random variables\footnote{For ease of presentation, we drop the dependence of the transmitted signal $X^n$, and the received signals $Y_k^n$, $\forall k \in [K]$, on the library $\textbf{W}$.} $X \left( \textbf{d}^{l}_{\mathcal{U}_{\textbf{d}}} \right)$, $Y_{{{\pi_{{\cal U}_{\textbf{d}}(1)}}}} \left( \textbf{d}^{l}_{\mathcal{U}_{\textbf{d}}} \right), \dots, Y_{{{\pi_{{\cal U}_{\textbf{d}}(N_{\textbf{d}})}}}} \left( \textbf{d}^{l}_{\mathcal{U}_{\textbf{d}}} \right)$, and $\left\{ V_{1} \left( \textbf{d}^{l}_{\mathcal{U}_{\textbf{d}}} \right), ..., V_{N_{\textbf{d}}-1} \left( \textbf{d}^{l}_{\mathcal{U}_{\textbf{d}}} \right) \right\} $, where  
\begin{align}\label{MarkovChainLowerBound}
    V_{1} \left( \textbf{d}^{l}_{\mathcal{U}_{\textbf{d}}} \right) \to \cdots \to V_{N_{\textbf{d}}-1} \left( \textbf{d}^{l}_{\mathcal{U}_{\textbf{d}}} \right) \to X \left( \textbf{d}^{l}_{\mathcal{U}_{\textbf{d}}} \right) \to \nonumber\\
    Y_{{{\pi_{{\cal U}_{\textbf{d}}(N_{\textbf{d}})}}}} \left( \textbf{d}^{l}_{\mathcal{U}_{\textbf{d}}} \right) \to \cdots \to Y_{{{\pi_{{\cal U}_{\textbf{d}}(1)}}}} \left( \textbf{d}^{l}_{\mathcal{U}_{\textbf{d}}} \right)
\end{align}
forms a Markov chain, and satisfy
\begin{subequations}
\label{ShirinAppendixLowerBound}
\begin{align}\label{ShirinAppendixLowerBound1}
R - {\varepsilon _n} \le & \frac{1}{n}I\left( {{W_{d^{l}_{\pi_{\mathcal{U}_{{\textbf{d}}}}(1)}}};{U_{\pi_{\mathcal{U}_{{\textbf{d}}}}(1)}}} \right) \nonumber\\
& + I\left( {V}_{{{\cal U}_{\textbf{d}}},1};Y_{\pi_{{{{\cal U}_{\textbf{d}}}}}(1)} \left( \textbf{d}^{l}_{\mathcal{U}_{\textbf{d}}} \right) \right),\\
\label{ShirinAppendixLowerBound2}
R - {\varepsilon _n} \le & \frac{1}{n} I\left( W_{d^{l}_{\pi_{\mathcal{U}_{{\textbf{d}}}}(i)}};U_{\pi_{\mathcal{U}_{{\textbf{d}}}}(1)}, \dots, U_{\pi_{\mathcal{U}_{{\textbf{d}}}}(i)} \right.\nonumber\\
& \qquad \qquad \qquad \left. \left| W_{d^{l}_{\pi_{\mathcal{U}_{{\textbf{d}}}}(1)}}, \dots, W_{d^{l}_{\pi_{\mathcal{U}_{{\textbf{d}}}}(i-1)}} \right. \right) \nonumber\\
&+ I\left( {V}_{i} \left( \textbf{d}^{l}_{\mathcal{U}_{\textbf{d}}} \right); Y_{\pi_{{{{\cal U}_{\textbf{d}}}}}(i)} \left( \textbf{d}^{l}_{\mathcal{U}_{\textbf{d}}} \right) \left| {V}_{i-1} \left( \textbf{d}^{l}_{\mathcal{U}_{\textbf{d}}} \right) \right. \right), \nonumber\\
& \qquad \qquad \qquad \qquad \quad \; \; \forall i \in [2:N_{\textbf{d}}-1],\\
\label{ShirinAppendixLowerBound3}
R - {\varepsilon _n} \le & \frac{1}{n} I\left( W_{d^{l}_{\pi_{\mathcal{U}_{{\textbf{d}}}}(N_{\textbf{d}})}}; U_{\pi_{\mathcal{U}_{{\textbf{d}}}}(1)}, \dots, U_{\pi_{\mathcal{U}_{{\textbf{d}}}}(N_{\textbf{d}})} \right. \nonumber\\
& \qquad \qquad \quad \qquad \left. \left| W_{d^{l}_{\pi_{\mathcal{U}_{{\textbf{d}}}}(1)}}, \dots, W_{d^{l}_{\pi_{\mathcal{U}_{{\textbf{d}}}}(N_{\textbf{d}}-1)}} \right. \right) \nonumber\\
& + I\left( X \left( \textbf{d}^{l}_{\mathcal{U}_{\textbf{d}}} \right); Y_{\pi_{{{{\cal U}_{\textbf{d}}}}}(N_{\textbf{d}})} \left( \textbf{d}^{l}_{\mathcal{U}_{\textbf{d}}} \right) \left| {V}_{N_{\textbf{d}}-1} \left( \textbf{d}^{l}_{\mathcal{U}_{\textbf{d}}} \right) \right. \right),
\end{align}
\end{subequations}
where $d^l_{\pi_{\mathcal{U}_{{\textbf{d}}}}(i)}$ is the $\pi_{\mathcal{U}_{{\textbf{d}}}}(i)$-th element of demand vector $\textbf{d}^{l}_{\mathcal{U}_{\textbf{d}}}$, $i \in \left[ N_{\textbf{d}} \right]$, and ${\varepsilon _n} >0$ tends to zeros as $n \to \infty$. We note that, due to the independence of the files and the fact that the users in ${\cal U}_{\textbf{d}}$ demand distinct files, for any uncoded cache placement phase and any ${\cal U}_{\textbf{d}}$ set, we have 
\begin{align}\label{LowerBoundAppendixMutInfEquality}
    & I\left( W_{d^{l}_{\pi_{\mathcal{U}_{{\textbf{d}}}}(i)}};U_{\pi_{\mathcal{U}_{{\textbf{d}}}}(1)}, \dots, U_{\pi_{\mathcal{U}_{{\textbf{d}}}}(i)} \left| W_{d^{l}_{\pi_{\mathcal{U}_{{\textbf{d}}}}(1)}}, \dots, W_{d^{l}_{\pi_{\mathcal{U}_{{\textbf{d}}}}(i-1)}} \right. \right)\nonumber\\
    &= I\left( W_{d^{l}_{\pi_{\mathcal{U}_{{\textbf{d}}}}(i)}};U_{\pi_{\mathcal{U}_{{\textbf{d}}}}(1)}, \dots, U_{\pi_{\mathcal{U}_{{\textbf{d}}}}(i)} \right), \; \forall i \in [2:N_{\textbf{d}}], l \in [D].
\end{align}
Thus, for an uncoded cache placement phase, \eqref{ShirinAppendixLowerBound} is equivalent to 
\begin{subequations}
\label{ShirinAppendixUncodedCachingLowerBound}
\begin{align}\label{ShirinAppendixUncodedCachingLowerBound1}
R - {\varepsilon _n} \le & \frac{1}{n}I\left( {{W_{d^{l}_{\pi_{\mathcal{U}_{{\textbf{d}}}}(1)}}};{U_{\pi_{\mathcal{U}_{{\textbf{d}}}}(1)}}} \right) \nonumber\\
& + I\left( {V}_{{{\cal U}_{\textbf{d}}},1};Y_{\pi_{{{{\cal U}_{\textbf{d}}}}}(1)} \left( \textbf{d}^{l}_{\mathcal{U}_{\textbf{d}}} \right) \right),\\
\label{ShirinAppendixUncodedCachingLowerBound2}
R - {\varepsilon _n} \le & \frac{1}{n}I\left( W_{d^{l}_{\pi_{\mathcal{U}_{{\textbf{d}}}}(i)}};U_{\pi_{\mathcal{U}_{{\textbf{d}}}}(1)}, \dots, U_{\pi_{\mathcal{U}_{{\textbf{d}}}}(i)} \right) \nonumber\\
&+ I\left( {V}_{i} \left( \textbf{d}^{l}_{\mathcal{U}_{\textbf{d}}} \right); Y_{\pi_{{{{\cal U}_{\textbf{d}}}}}(i)} \left( \textbf{d}^{l}_{\mathcal{U}_{\textbf{d}}} \right) \left| {V}_{i-1} \left( \textbf{d}^{l}_{\mathcal{U}_{\textbf{d}}} \right) \right. \right),\nonumber\\
& \qquad \qquad \qquad \qquad \qquad \forall i \in [2:N_{\textbf{d}}-1],\\
\label{ShirinAppendixUncodedCachingLowerBound3}
R - {\varepsilon _n} \le & \frac{1}{n}I\left( W_{d^{l}_{\pi_{\mathcal{U}_{{\textbf{d}}}}(N_{\textbf{d}})}}; U_{\pi_{\mathcal{U}_{{\textbf{d}}}}(1)}, \dots, U_{\pi_{\mathcal{U}_{{\textbf{d}}}}(N_{\textbf{d}})} \right) \nonumber\\
& + I\left( X \left( \textbf{d}^{l}_{\mathcal{U}_{\textbf{d}}} \right); Y_{\pi_{{{{\cal U}_{\textbf{d}}}}}(N_{\textbf{d}})} \left( \textbf{d}^{l}_{\mathcal{U}_{\textbf{d}}} \right) \left| {V}_{N_{\textbf{d}}-1} \left( \textbf{d}^{l}_{\mathcal{U}_{\textbf{d}}} \right) \right. \right),
\end{align}
\end{subequations}
For the Gaussian channel \eqref{ChannelModel}, for $i=1, ..., N_{\textbf{d}}$, we have \cite{BergmansCapacityDegradeBC}
\begin{align}\label{LowerBoundAppendixBergman}
    & I\left( {V}_{i} \left( \textbf{d}^{l}_{\mathcal{U}_{\textbf{d}}} \right); Y_{\pi_{{{{\cal U}_{\textbf{d}}}}}(i)} \left( \textbf{d}^{l}_{\mathcal{U}_{\textbf{d}}} \right) \left| {V}_{i-1} \left( \textbf{d}^{l}_{\mathcal{U}_{\textbf{d}}} \right) \right. \right) \le \nonumber\\
    & \quad \frac{1}{2}{\log _2}\left( {1 + \frac{{{\beta _{i}} h_{\pi_{{{{\cal U}_{\textbf{d}}}}}(i)}^2 P_{\mathcal{U}_{\textbf{d}}}\left( \textbf{d}^{l}_{\mathcal{U}_{\textbf{d}}} \right)}}{{h_{\pi_{{{{\cal U}_{\textbf{d}}}}}(i)}^2 \sum\nolimits_{j = i + 1}^{N_{\textbf{d}}} {{\beta _j}P_{\mathcal{U}_{\textbf{d}}}\left( \textbf{d}^{l}_{\mathcal{U}_{\textbf{d}}} \right)}  + 1}}} \right),
\end{align}
for some $\beta_i \ge 0$, for $i=1, ..., N_{\textbf{d}}$, such that $\sum\nolimits_{i = 1}^{N_{\textbf{d}}} {{\beta_i}}  = 1$, where we set ${V}_{0} \left( \textbf{d}^{l}_{\mathcal{U}_{\textbf{d}}} \right) \buildrel \Delta \over = 0$, and ${V}_{N_{\textbf{d}}} \left( \textbf{d}^{l}_{\mathcal{U}_{\textbf{d}}} \right) \buildrel \Delta \over = X \left( \textbf{d}^{l}_{\mathcal{U}_{\textbf{d}}} \right)$. From \eqref{ShirinAppendixUncodedCachingLowerBound} and \eqref{LowerBoundAppendixBergman}, for $n$ sufficiently large, the average power $P_{\mathcal{U}_{\textbf{d}}}\left( \textbf{d}^{l}_{\mathcal{U}_{\textbf{d}}} \right)$ to satisfy any demand vector $\textbf{d}^{l}_{\mathcal{U}_{\textbf{d}}} \in \mathcal{D}^l_{\mathcal{U}_{\textbf{d}}}$, for $l \in [D]$, is lower bounded by
\begin{subequations}
\label{LowerBoundPowerMutInfAppendix}
\begin{align}\label{LowerBoundPowerMutInfAppendix1}
P_{\mathcal{U}_{\textbf{d}}}\left( \textbf{d}^{l}_{\mathcal{U}_{\textbf{d}}} \right) \ge & f\left( c^{l}_{\pi_{{{{\cal U}_{\textbf{d}}}}}(1)}, \dots, c^{l}_{\pi_{{{{\cal U}_{\textbf{d}}}}}(N_{\textbf{d}})} \right) \buildrel \Delta \over = \nonumber\\
& \quad \sum\limits_{i = 1}^{N_{\textbf{d}}} \left( \frac{2^{2c^{l}_{\pi_{{{{\cal U}_{\textbf{d}}}}}(i)}} - 1}{h_{\pi_{{{{\cal U}_{\textbf{d}}}}}(i)}^2} \right)\prod\limits_{j = 1}^{i - 1} {{2^{2{c^{l}_{\pi_{{{{\cal U}_{\textbf{d}}}}}(j)}}}}},
\end{align}
where, for $i =1, ..., N_{\textbf{d}}$ and $l=1, ..., D$,
\begin{align}\label{LowerBoundPowerMutInfAppendix2}
c^{l}_{\pi_{{{{\cal U}_{\textbf{d}}}}}(i)} \buildrel \Delta \over = R - \frac{1}{n}I\left( W_{d^{l}_{\pi_{\mathcal{U}_{{\textbf{d}}}}(i)}};U_{\pi_{\mathcal{U}_{{\textbf{d}}}}(1)}, \dots, U_{\pi_{\mathcal{U}_{{\textbf{d}}}}(i)} \right).
\end{align}
\end{subequations}
Note that the lower bound in \eqref{LowerBoundPowerMutInfAppendix1} does not depend on any particular demand in $\mathcal{D}^l_{\mathcal{U}_{\textbf{d}}}$, $l \in [D]$. Thus, from \eqref{SmallerSubsetsPowerMemory2Rewritten} and \eqref{LowerBoundPowerMutInfAppendix}, we have
\begin{align}\label{SmallerSubsetsPowerMemory2FunctionF}
\bar P \ge {{\rm{E}}_{{\mathcal{U}_{\textbf{d}}}}}\left[ \frac{1}{D} \sum\limits_{l=1}^{D} f\left( c^{l}_{\pi_{{{{\cal U}_{\textbf{d}}}}}(1)}, \dots, c^{l}_{\pi_{{{{\cal U}_{\textbf{d}}}}}(N_{\textbf{d}})} \right) \right].
\end{align}

\begin{lemma}\label{DualPowerShirinLemma}
Given a set of users $\mathcal{U}_{\emph{\textbf{d}}}$ of size $N_{\emph{\textbf{d}}}$ with distinct demands, we have
\begin{align}\label{LowerBoundEachDemandLemma}
& \frac{1}{D} \sum\limits_{l=1}^{D} {f\left( c^{l}_{\pi_{{{{\cal U}_{\textbf{d}}}}}(1)}, \dots, c^{l}_{\pi_{{{{\cal U}_{\textbf{d}}}}}(N_{\textbf{d}})} \right)} \ge \nonumber\\
& \sum\limits_{i = 1}^{N_{\emph{\textbf{d}}}}  \left( \frac{{{2^{2R\left( {1 - \min \left\{ {iM/N,1} \right\}} \right)}} - 1}}{h^2_{\pi_{\mathcal{U}_{\emph{\textbf{d}}}}(i)}} \right)\prod\limits_{j = 1}^{i - 1} 2^{2R\left( {1 - \min \left\{ {jM/N,1} \right\}} \right)}.
\end{align}
\end{lemma}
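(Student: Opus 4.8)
The plan is to exploit two structural properties of the function $f$ in \eqref{LowerBoundPowerMutInfAppendix1} — convexity and coordinate-wise monotonicity — together with a symmetry argument over the $D$ subsets $\mathcal{D}^l_{\mathcal{U}_{\textbf{d}}}$. Observe first that $f\left( c^{l}_{\pi_{\mathcal{U}_{\textbf{d}}}(1)}, \dots, c^{l}_{\pi_{\mathcal{U}_{\textbf{d}}}(N_{\textbf{d}})} \right)$ is exactly the minimum Gaussian-BC transmit power of Proposition \ref{GeneralPowerAllocationLemma} for rates $c^{l}_{\pi_{\mathcal{U}_{\textbf{d}}}(i)}$ and ordered gains $h_{\pi_{\mathcal{U}_{\textbf{d}}}(1)}^2 \le \cdots \le h_{\pi_{\mathcal{U}_{\textbf{d}}}(N_{\textbf{d}})}^2$. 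I would first push the average $\frac{1}{D}\sum_{l}$ inside $f$ via Jensen, then lower bound each averaged coordinate, and finally use monotonicity to recover the stated right-hand side.

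For convexity, regard $f$ as a function of its arguments $\left( c_1, \dots, c_{N_{\textbf{d}}} \right)$ and set $S_i \buildrel \Delta \over = 2^{2\sum_{j=1}^{i} c_{j}}$ with $S_0 \buildrel \Delta \over = 1$. An Abel summation rewrites $f$ as
\[
f\left( c_1, \dots, c_{N_{\textbf{d}}} \right) = \frac{S_{N_{\textbf{d}}}}{h_{\pi_{\mathcal{U}_{\textbf{d}}}(N_{\textbf{d}})}^2} - \frac{1}{h_{\pi_{\mathcal{U}_{\textbf{d}}}(1)}^2} + \sum_{i=1}^{N_{\textbf{d}}-1} S_i \left( \frac{1}{h_{\pi_{\mathcal{U}_{\textbf{d}}}(i)}^2} - \frac{1}{h_{\pi_{\mathcal{U}_{\textbf{d}}}(i+1)}^2} \right).
\]
Each $S_i$ is the exponential of a linear function of $\left( c_1, \dots, c_{N_{\textbf{d}}} \right)$, hence convex and non-decreasing in each $c_j$; moreover the ordering of the channel gains makes every coefficient $\frac{1}{h_{\pi_{\mathcal{U}_{\textbf{d}}}(i)}^2} - \frac{1}{h_{\pi_{\mathcal{U}_{\textbf{d}}}(i+1)}^2}$ non-negative. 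Thus $f$ is a non-negative combination of convex functions plus a constant, so it is jointly convex and coordinate-wise non-decreasing. Jensen's inequality then yields $\frac{1}{D}\sum_{l=1}^{D} f\left( c^{l}_{\pi_{\mathcal{U}_{\textbf{d}}}(1)}, \dots, c^{l}_{\pi_{\mathcal{U}_{\textbf{d}}}(N_{\textbf{d}})} \right) \ge f\left( \bar c_1, \dots, \bar c_{N_{\textbf{d}}} \right)$, where $\bar c_i \buildrel \Delta \over = \frac{1}{D}\sum_{l=1}^{D} c^{l}_{\pi_{\mathcal{U}_{\textbf{d}}}(i)}$.

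It remains to show $\bar c_i \ge R\left( 1 - \min\{ iM/N, 1 \} \right)$. By \eqref{LowerBoundPowerMutInfAppendix2}, $\bar c_i = R - \frac{1}{nD}\sum_{l=1}^{D} I\left( W_{d^{l}_{\pi_{\mathcal{U}_{\textbf{d}}}(i)}}; U_{\pi_{\mathcal{U}_{\textbf{d}}}(1)}, \dots, U_{\pi_{\mathcal{U}_{\textbf{d}}}(i)} \right)$, and the caches here do not depend on $l$. As $l$ ranges over the $D$ subsets, the demand of user $\pi_{\mathcal{U}_{\textbf{d}}}(i)$ runs over all injective file assignments, so each index of $[N]$ is demanded in exactly $D/N$ of them; the average over $l$ therefore equals $\frac{1}{N}\sum_{w=1}^{N} I\left( W_w; U_{\pi_{\mathcal{U}_{\textbf{d}}}(1)}, \dots, U_{\pi_{\mathcal{U}_{\textbf{d}}}(i)} \right)$. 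By independence of the files, $\sum_{w=1}^{N} I\left( W_w; U_{\pi_{\mathcal{U}_{\textbf{d}}}(1)}, \dots, U_{\pi_{\mathcal{U}_{\textbf{d}}}(i)} \right) \le I\left( W_1, \dots, W_N; U_{\pi_{\mathcal{U}_{\textbf{d}}}(1)}, \dots, U_{\pi_{\mathcal{U}_{\textbf{d}}}(i)} \right) \le \sum_{k=1}^{i} H\left( U_{\pi_{\mathcal{U}_{\textbf{d}}}(k)} \right) \le i\,nMR$, while each of the $N$ terms is also at most $nR$; dividing by $nN$ bounds the subtracted quantity by $R \min\{ iM/N, 1 \}$, giving $\bar c_i \ge R\left( 1 - \min\{ iM/N, 1 \} \right)$. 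Since $f$ is coordinate-wise non-decreasing, $f\left( \bar c_1, \dots, \bar c_{N_{\textbf{d}}} \right) \ge f\left( R(1 - \min\{ M/N, 1 \}), \dots, R(1 - \min\{ N_{\textbf{d}} M/N, 1 \}) \right)$, which is precisely the right-hand side of \eqref{LowerBoundEachDemandLemma}.

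The main obstacle is establishing convexity of $f$: expanding it term by term produces a mixture of convex and concave exponentials, and only after the Abel summation, combined with the gain ordering $h_{\pi_{\mathcal{U}_{\textbf{d}}}(1)}^2 \le \cdots \le h_{\pi_{\mathcal{U}_{\textbf{d}}}(N_{\textbf{d}})}^2$, does $f$ appear as a non-negative combination of convex functions so that Jensen applies. The averaging identity — the source of the factor $1/N$ that turns $\min\{ iM, 1 \}$ into $\min\{ iM/N, 1 \}$ — is the conceptually delicate ingredient, but it becomes routine once the counting of injective demand assignments is in place.
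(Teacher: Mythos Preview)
Your proposal is correct and follows essentially the same route as the paper: the Abel-summation rewriting of $f$ (with the gain ordering making all coefficients non-negative) is exactly the convexity argument in the paper's Appendix~\ref{ProofConvexityOfF}, Jensen then reduces to bounding $\bar c_i$, and your symmetry count ``each file appears in $D/N$ of the assignments'' matches the paper's $\binom{N-1}{N_{\textbf{d}}-1}(N_{\textbf{d}}-1)!$ factor. The only cosmetic difference is that you make the coordinate-wise monotonicity of $f$ explicit when passing from $f(\bar c_1,\dots,\bar c_{N_{\textbf{d}}})$ to the right-hand side, whereas the paper leaves this step implicit.
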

\begin{proof}
It is proved in Appendix \ref{ProofConvexityOfF} that $f\left( \cdot \right)$ is a convex function of $\left( c^{l}_{\pi_{{{{\cal U}_{\textbf{d}}}}}(1)}, \dots, c^{l}_{\pi_{{{{\cal U}_{\textbf{d}}}}}(N_{\textbf{d}})} \right)$. Thus,
\begin{align}\label{ConvexityOfF}
&\frac{1}{D}\sum\limits_{l=1}^{D} {f\left( c^{l}_{\pi_{{{{\cal U}_{\textbf{d}}}}}(1)}, \dots, c^{l}_{\pi_{{{{\cal U}_{\textbf{d}}}}}(N_{\textbf{d}})} \right)} \ge \nonumber\\
& \qquad \qquad f\left( {\frac{1}{D}\sum\limits_{l=1}^{D} c^{l}_{\pi_{{{{\cal U}_{\textbf{d}}}}}(1)} ,...,\frac{1}{D}\sum\limits_{l=1}^{D} c^{l}_{\pi_{{{{\cal U}_{\textbf{d}}}}}(N_{\textbf{d}})} } \right).
\end{align}
From the definition, we have, for $i=1, ..., N_{\textbf{d}}$, 
\begin{align}\label{SimcFromDef}
    & \frac{1}{D}\sum\limits_{l=1}^{D} c^{l}_{\pi_{{{{\cal U}_{\textbf{d}}}}}(i)} = \nonumber\\
    & \qquad R - \frac{1}{nD} \sum\limits_{l=1}^{D} I\left( W_{d^{l}_{\pi_{\mathcal{U}_{{\textbf{d}}}}(i)}};U_{\pi_{\mathcal{U}_{{\textbf{d}}}}(1)}, \dots, U_{\pi_{\mathcal{U}_{{\textbf{d}}}}(i)} \right). 
\end{align}
where, due to the symmetry, each file $W_k$, for $k=1, ..., N$, appears $\binom{N-1}{N_{\textbf{d}}-1}{\left(N_{\textbf{d}}-1\right)!}$ times in the sum on the RHS of \eqref{SimcFromDef} for each $i$ value. Thus, for $i=1, ..., N_{\textbf{d}}$, we have
\newcommand\firstinequality{\mathrel{\overset{\makebox[0pt]{\mbox{\normalfont\tiny\sffamily (a)}}}{\ge}}}
\begin{align}\label{SimcFromDefSim}
    \frac{1}{D}\sum\limits_{l=1}^{D} c^{l}_{\pi_{{{{\cal U}_{\textbf{d}}}}}(i)} & = R - \frac{\binom{N-1}{N_{\textbf{d}}-1}{\left(N_{\textbf{d}}-1\right)!}}{nD} \nonumber\\
    & \qquad \qquad \sum\nolimits_{k=1}^{N} I\left( W_{k};U_{\pi_{\mathcal{U}_{{\textbf{d}}}}(1)}, \dots, U_{\pi_{\mathcal{U}_{{\textbf{d}}}}(i)} \right)\nonumber\\
    & = R - \frac{1}{nN} \sum\nolimits_{k=1}^{N} I\left( W_{k};U_{\pi_{\mathcal{U}_{{\textbf{d}}}}(1)}, \dots, U_{\pi_{\mathcal{U}_{{\textbf{d}}}}(i)} \right)\nonumber\\
    & \firstinequality R - \frac{1}{nN} I\left( W_{1}, ..., W_{N};U_{\pi_{\mathcal{U}_{{\textbf{d}}}}(1)}, \dots, U_{\pi_{\mathcal{U}_{{\textbf{d}}}}(i)} \right) \nonumber\\
    & = R - \frac{R}{N} \min \{ iM,N \}, 
\end{align}
where (a) follows from the independence of the files. From \eqref{ConvexityOfF} and \eqref{SimcFromDefSim}, and the definition of function $f$, we have 
\begin{align}\label{EndOfProofLemma1}
    & \frac{1}{D}\sum\nolimits_{l=1}^{D} {f\left( c^{l}_{\pi_{{{{\cal U}_{\textbf{d}}}}}(1)}, \dots, c^{l}_{\pi_{{{{\cal U}_{\textbf{d}}}}}(N_{\textbf{d}})} \right)} \ge \nonumber\\
    & f\left( R\left( 1-\min \left\{ \frac{iM}{N},1 \right\} \right), \dots, R\left( 1-\min \left\{ \frac{iM}{N},1 \right\} \right) \right),
\end{align}
which concludes the proof of Lemma \ref{DualPowerShirinLemma}. 
\end{proof}

\begin{figure*}[t!]
\centering
\begin{subfigure}{.47\textwidth}
  \centering
  \includegraphics[width=1.0\linewidth]{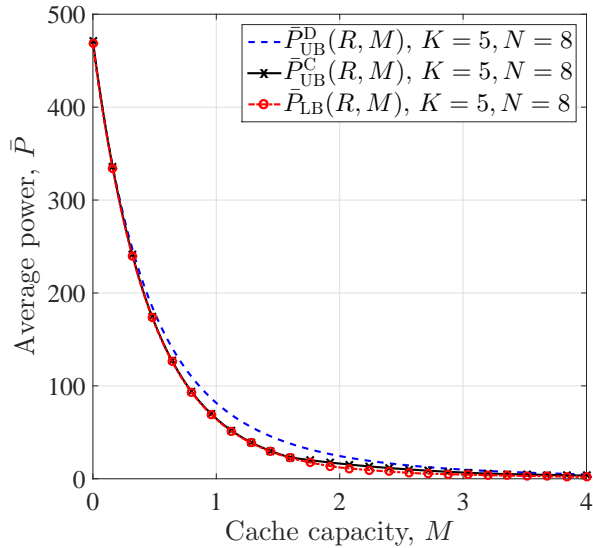}
  \caption{Average power-memory trade-off}
  \label{K5_N8_Average}
\end{subfigure}%
\begin{subfigure}{.47\textwidth}
  \centering
  \includegraphics[width=1.0\linewidth]{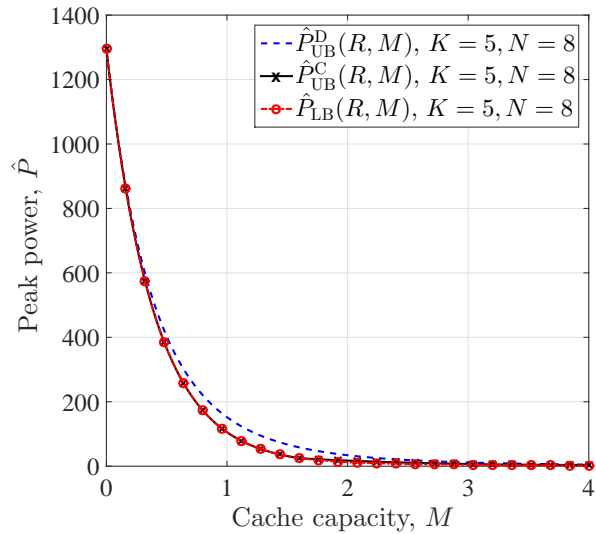}
  \caption{Peak power-memory trade-off}
  \label{K5_N8_Peak}
\end{subfigure}
\caption{Power-memory trade-off for a Gaussian BC with $K=5$ users, and $N=8$ files.}
\label{K5_N8}
\end{figure*}

\begin{figure}[!t]
\centering
\includegraphics[scale=0.4]{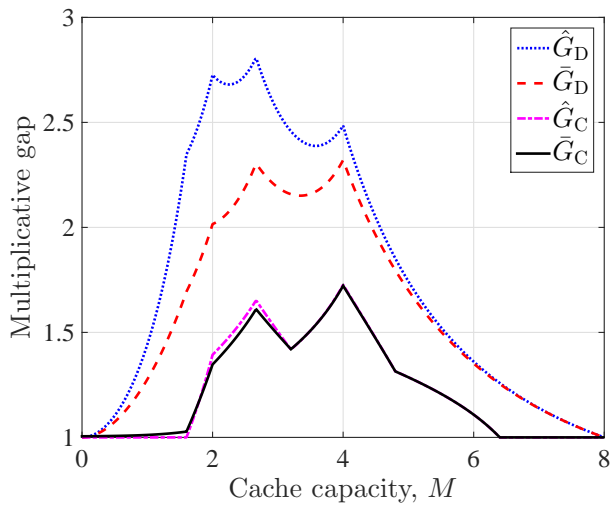}
\caption{Multiplicative gap between the upper and lower bounds for $K=5$ users and $N=8$ files.}
\label{N8_K5_MultGap}
\end{figure}

According to \eqref{SmallerSubsetsPowerMemory2FunctionF} and Lemma \ref{DualPowerShirinLemma}, $\bar P$ is lower bounded by $\bar P_{\rm{LB}}(R,M)$ defined in \eqref{LowerBoundAveragePowerMemoryTheorem}. Thus, $\bar P_{\rm{LB}}(R,M)$ is a lower bound on $\bar P^*_{\mathcal{U}_{\textbf{d}}}(R,M)$ as well as $\bar P^*(R,M)$.

\begin{figure*}[t!]
\centering
\begin{subfigure}{.44\textwidth}
  \centering
  \includegraphics[width=1.0\linewidth]{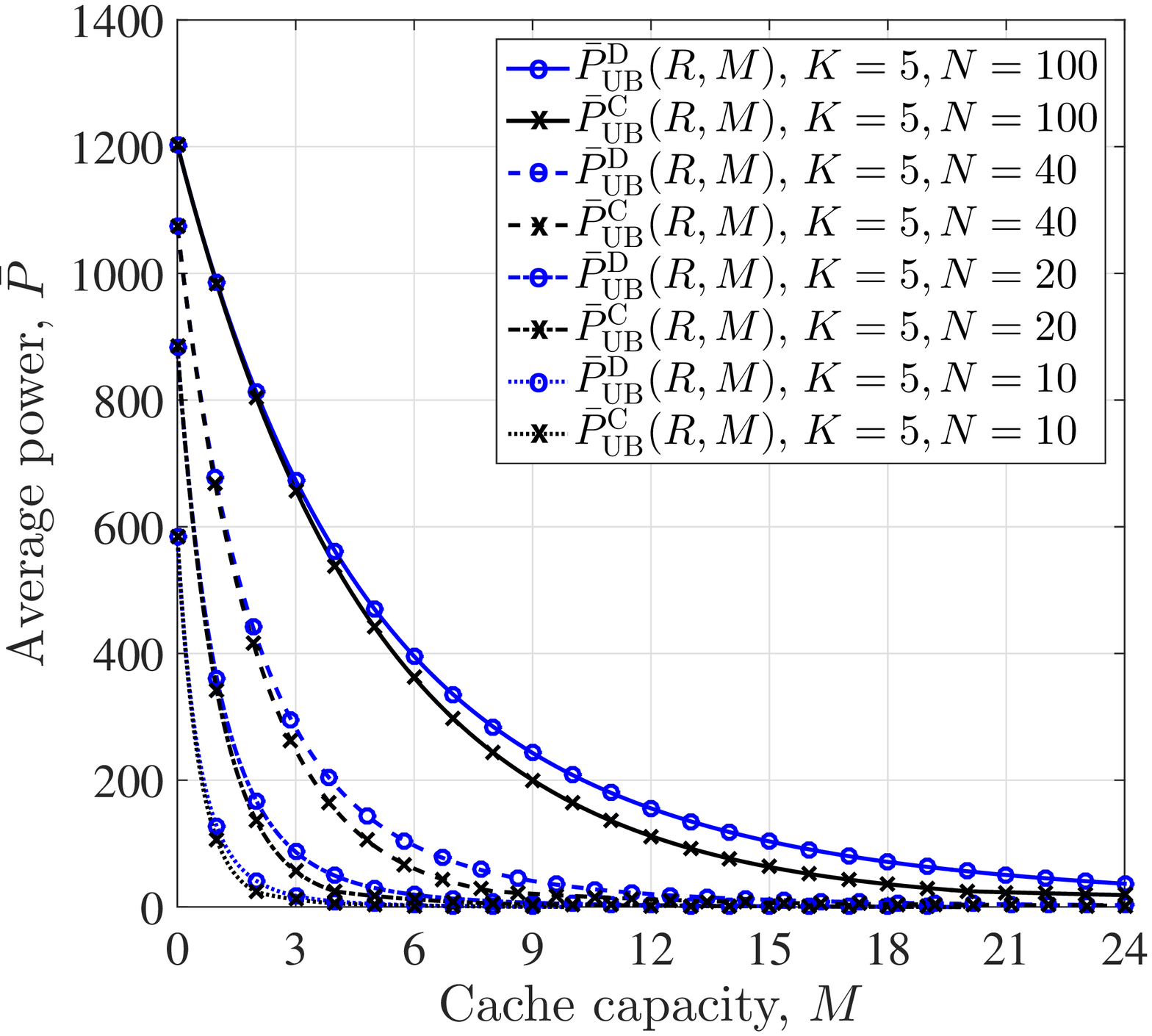}
  \caption{Average power-memory trade-off}
  \label{K5_NVaries_Average}
\end{subfigure}%
\begin{subfigure}{.44\textwidth}
  \centering
  \includegraphics[width=1.0\linewidth]{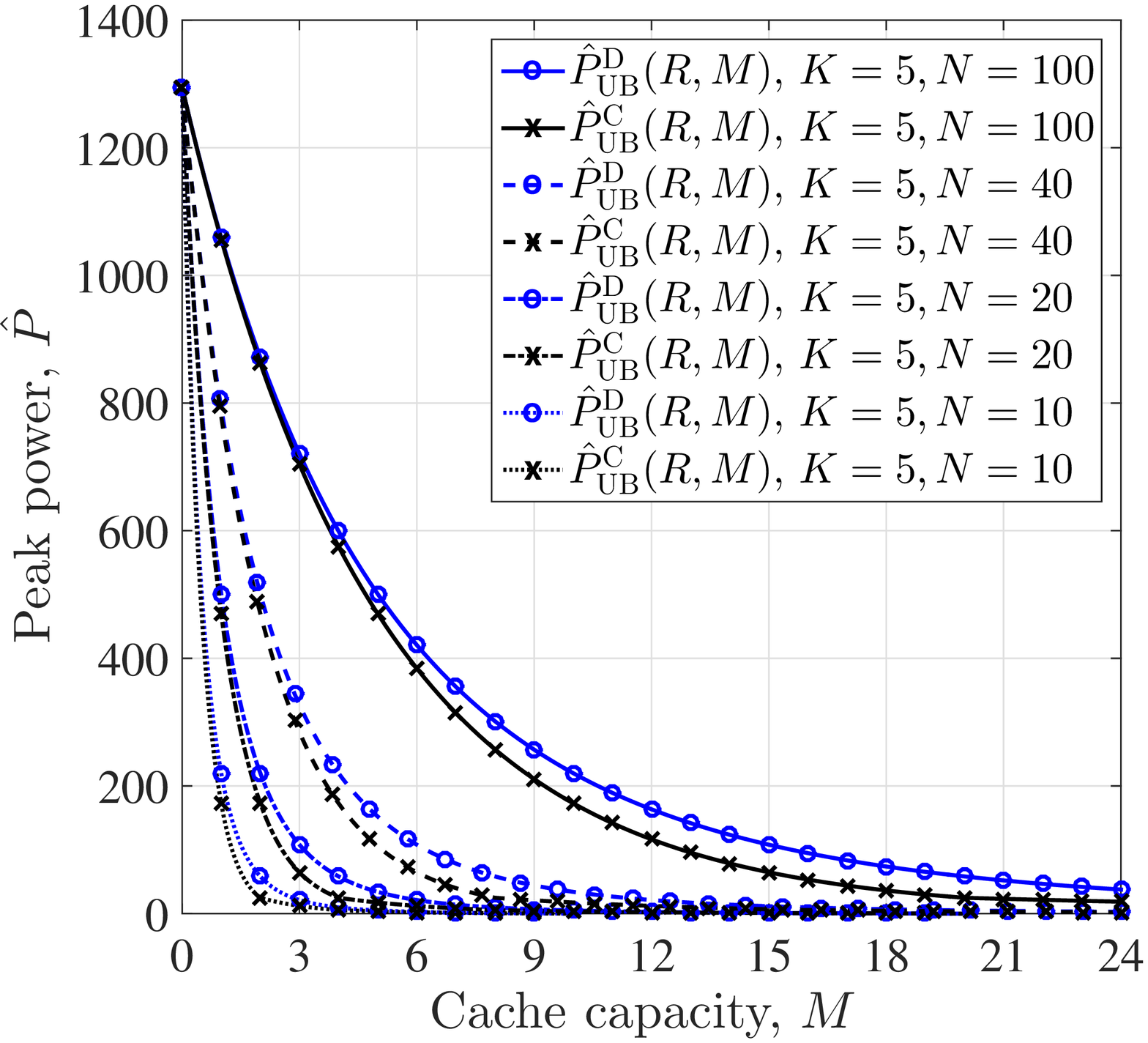}
  \caption{Peak power-memory trade-off}
  \label{K5_NVaries_Peak}
\end{subfigure}
\caption{Power-memory trade-off for a Gaussian BC with $K=5$ users, and various number of files $N \in \{10, 20, 40, 100\}$ in the library.}
\label{K5_NVaries}
\end{figure*}

\begin{figure}[!t]
\centering
\includegraphics[scale=0.39]{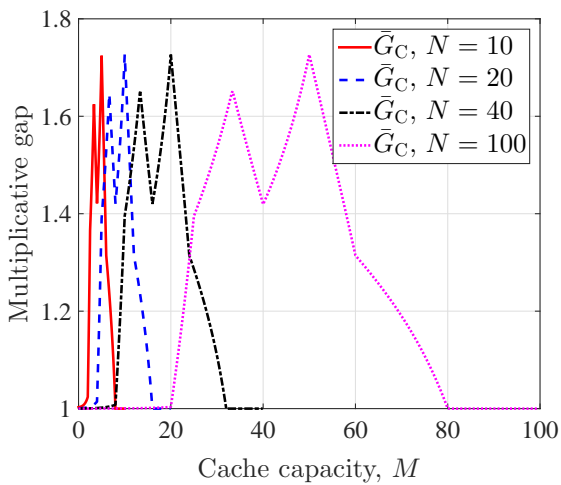}
\caption{The multiplicative gap between the upper and lower bounds on the average power for the centralized caching scenario for $K = 5$ users and $N \in \{ 10,20,40,100\}$ files.}
\label{NVaries_K5_MultGap}
\end{figure}

Next, we prove the lower bound on $\hat P^*(R,M)$ stated in Theorem \ref{LowerboundPowerMemoryTheorem}. For any $\mathcal{U}_{\textbf{d}}$ set, let $\left( R,M,\bar P, \hat P \right)$ tuple be $\mathcal{U}_{\textbf{d}}$-achievable. We have $\hat P \ge {P_{\mathcal{U}_{\textbf{d}}}\left( \textbf{d} \right)}$, $\forall \textbf{d} \in \mathcal{D}_{\mathcal{U}_{\textbf{d}}}$, which is equivalent to
\begin{equation}\label{HatPProof3}
\hat P \ge {P_{\mathcal{U}_{\textbf{d}}}\left( \textbf{d}^l_{\mathcal{U}_{\textbf{d}}} \right)}, \quad \forall \textbf{d}^{l}_{\mathcal{U}_{\textbf{d}}} \in \mathcal{D}^l_{\mathcal{U}_{\textbf{d}}}, \mbox{for $l=1, ..., D$}.  
\end{equation}
Averaging over all $N_{\mathcal{U}_{{\textbf{d}}}} = D N'_{\mathcal{U}_{{\textbf{d}}}}$ possible demands with the same ${\mathcal{U}_{{\textbf{d}}}}$ set, we have
\begin{equation}\label{HatPProof4}
\hat P \ge \frac{1}{D} \sum\nolimits_{l=1}^{D} \left( \frac{1}{N'_{\mathcal{U}_{{\textbf{d}}}}} \sum\nolimits_{\textbf{d}^{l}_{\mathcal{U}_{\textbf{d}}} \in {\mathcal{D}^l_{{\mathcal{U}_{\textbf{d}}}}}} {P_{\mathcal{U}_{\textbf{d}}}\left( \textbf{d}^{l}_{\mathcal{U}_{\textbf{d}}} \right)} \right).  
\end{equation}
According to \eqref{LowerBoundPowerMutInfAppendix} and Lemma \ref{DualPowerShirinLemma}, $\hat P$ is lower bounded as follows:
\begin{align}\label{HatPProof5}
\hat P \ge & \sum\limits_{i = 1}^{N_{{\textbf{d}}}} {\left( \frac{{{2^{2R\left( {1 - \min \left\{ {iM/N,1} \right\}} \right)}} - 1}}{h_{\pi_{\mathcal{U}_{{\textbf{d}}}}(i)}^2} \right)\prod\limits_{j = 1}^{i - 1} 2^{2R\left( {1 - \min \left\{ {jM/N,1} \right\}} \right)} }, \nonumber\\
& \qquad \qquad \qquad \qquad \qquad \qquad \qquad \qquad \qquad  \forall \mathcal{U}_{\textbf{d}}.  
\end{align}
Thus, we have
\begin{align}\label{HatPProof6}
\hat P \ge & \mathop {\max }\limits_{ \mathcal{U}_{\textbf{d}}} \left\{ \sum\nolimits_{i = 1}^{N_{{\textbf{d}}}} \left( \frac{{{2^{2R\left( {1 - \min \left\{ {iM/N,1} \right\}} \right)}} - 1}}{h^2_{\pi_{\mathcal{U}_{{\textbf{d}}}}(i)}} \right) \right. \nonumber\\
& \qquad \qquad \qquad  \left. \prod\nolimits_{j = 1}^{i - 1} 2^{2R\left( {1 - \min \left\{ {jM/N,1} \right\}} \right)}  \right\}.
\end{align}
We note that the term on the RHS of the inequality in \eqref{HatPProof6} is equivalent to $\hat P_{\rm{LB}}(R,M)$ defined in \eqref{LowerBoundWorstCasePowerMemoryTheorem}. Thus, $\hat P_{\rm{LB}}(R,M)$ is a lower bound on ${\hat P}^*_{\mathcal{U}_{\textbf{d}}}\left( {R,M} \right)$ as well as ${{\hat P}^*}\left( {R,M} \right)$. 
\end{proof}

\section{Numerical Results}\label{NumericalRes}
For the numerical results, we assume that the rate of the files in the library is fixed to $R = 1$, and the channel gains are $1/{h_k^2} = 2-0.2(k-1)$, $k \in [K]$.

The average power-memory trade-off $\bar P^*(R,M)$ and peak power-memory trade-off $\hat{P}^*(R,M)$ are shown in Fig. \ref{K5_N8_Average} and Fig. \ref{K5_N8_Peak}, respectively, for $K=5$ users, and $N=8$ files in the library. The gap between centralized and decentralized caching, which measures the power required to compensate for the decentralization of the cache placement phase, is relatively small, particularly for small and large values of the cache capacity. We observe that the minimum average and peak powers drop very quickly even with a small cache capacity available at the users. The lower bound is generally tight for both average and peak power values with respect to the upper bound for the centralized scenario for the whole range of cache capacities\footnote{Note that the figure does not include the cache capacity range $4 \le M \le 8$, in which case the three curves coincide in both figures.}. The peak power values exhibit similar behaviour to the average power, with significantly higher values. This shows that adapting the transmit power to the demand combination can reduce the average energy consumption significantly, while the transmitter spends much higher energy for some demand combinations.

\begin{figure*}[t!]
\centering
\begin{subfigure}{.48\textwidth}
  \centering
  \includegraphics[width=1.0\linewidth]{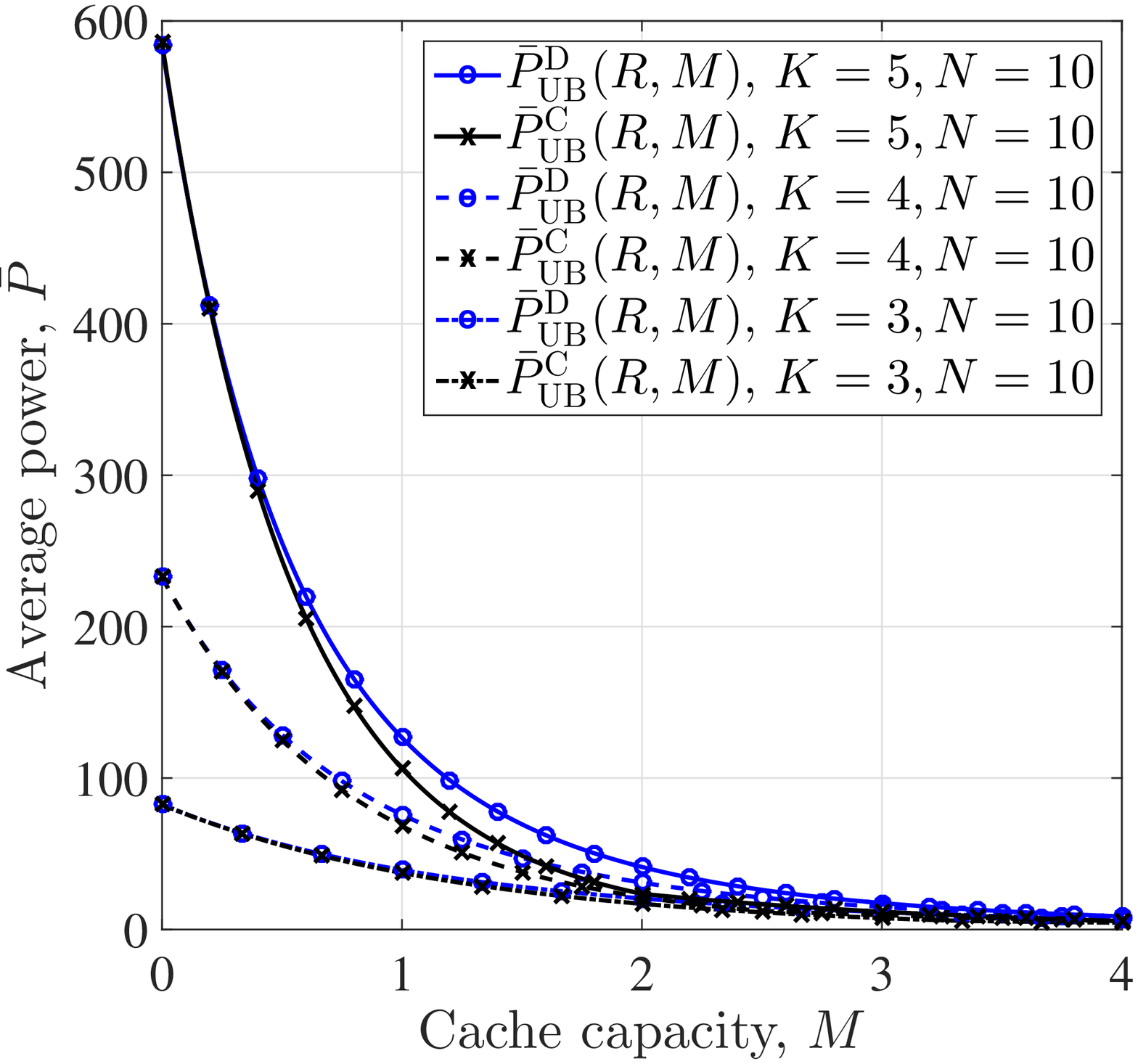}
  \caption{Average power-memory trade-off}
  \label{Kvaries_N10_Average}
\end{subfigure}%
\begin{subfigure}{.48\textwidth}
  \centering
  \includegraphics[width=1.0\linewidth]{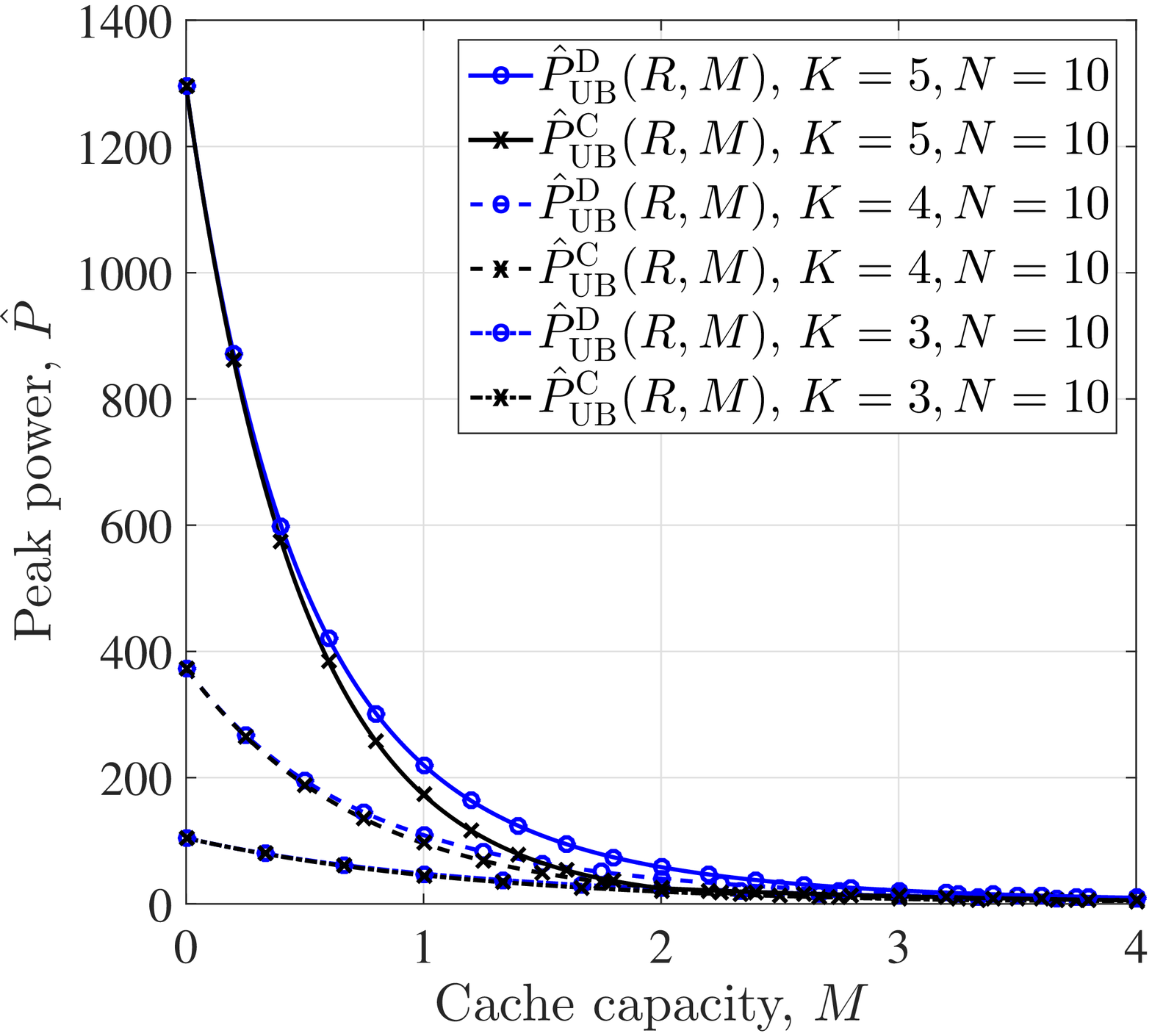}
  \caption{Peak power-memory trade-off}
  \label{Kvaries_N10_Peak}
\end{subfigure}
\caption{Power-memory trade-off for a Gaussian BC with various number of users $K \in \{3,4,5\}$, and $N=10$ in the library.}
\label{Kvaries_N10}
\end{figure*}

\begin{figure}[!t]
\centering
\includegraphics[scale=0.4]{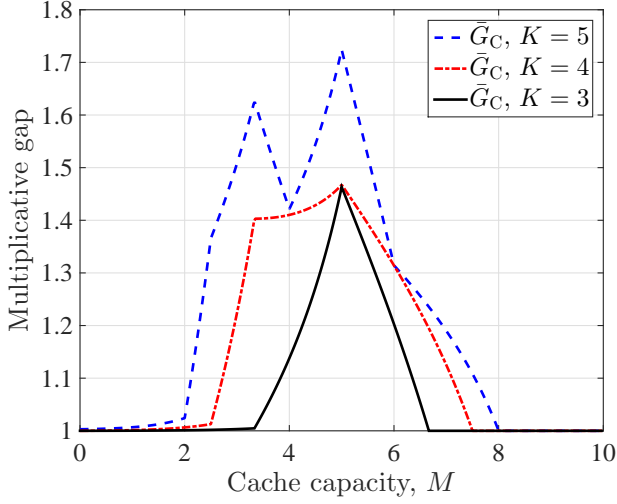}
\caption{The multiplicative gap between the upper and lower bounds on the average power for the centralized caching scenario for $K \in \{ 3,4,5 \}$ users and $N=10$ files.}
\label{N10_KVaries_MultGap}
\end{figure}

We define the following multiplicative gaps between the two bounds on the average and peak power values for the centralized and decentralized caching scenarios:
\begin{subequations}
\begin{align}
\bar{G}_{\rm{C}} \buildrel \Delta \over = \frac{\bar P_{\rm{UB}}^{\rm{C}}(R,M)}{\bar P_{\rm{LB}}(R,M)},\\
\bar{G}_{\rm{D}} \buildrel \Delta \over = \frac{\bar P_{\rm{UB}}^{\rm{D}}(R,M)}{\bar P_{\rm{LB}}(R,M)},\\
\hat{G}_{\rm{C}} \buildrel \Delta \over = \frac{\hat P_{\rm{UB}}^{\rm{C}}(R,M)}{\hat P_{\rm{LB}}(R,M)},\\
\hat{G}_{\rm{D}} \buildrel \Delta \over = \frac{\hat P_{\rm{UB}}^{\rm{D}}(R,M)}{\hat P_{\rm{LB}}(R,M)}.
\end{align}
\end{subequations}
Fig. \ref{N8_K5_MultGap} illustrates the multiplicative gaps for $K=5$ and $N=8$. The multiplicative gap is significantly smaller for centralized caching for both the average  and peak power values. Observe that the multiplicative gap is relatively small for the entire range of cache capacities, and it is higher for the intermediate values of the cache capacity.

Fig. \ref{K5_NVaries} illustrates the effect of the number of files $N$ on the upper bound. In Fig. \ref{K5_NVaries_Average} and Fig. \ref{K5_NVaries_Peak} the average power values and the peak power values are considered, respectively, for $K=5$ users, and different number of files $N \in \{ 10,20,40,100 \}$. The gap between the centralized and decentralized caching scenarios increases $N$. Another observation to be noted is the increase in the average power with $N$, even though the number of files requested by the users remains the same. This stems from two reasons: first, the effect of the local caches diminishes as the library size increases; and second, the users are less likely to request the same files from a larger library (and hence the increase in the average power values for low $M$ values). We also observe that the peak power increases with $N$, but only for non-zero $M$ values. This is because the increase in the peak power is only due to the diminished utility of the cache capacity, whereas the peak power depends only on the worst-case demand combination; and thus, does not depend on the likelihood of common requests. We note that, for a fixed number of users $K$, the gap between the average power and the peak power reduces by increasing $N$, and for $N \gg K$, the gap diminishes, since, with high probability, the users demand distinct files.

Fig. \ref{NVaries_K5_MultGap} illustrates the multiplicative gap $\bar{G}_{\rm{C}}$, i.e., for the upper and lower bounds on the average transmit power, with respect to $M$, for $K=5$ users and $N \in \{ 10,20,40,100 \}$. Observe that the bounds are relatively close, and the multiplicative gap follows a similar pattern for different $N$ values, expanded horizontally with increasing $N$.

The effect of the number of users $K$ on the average and peak power values for both the centralized and decentralized scenarios is shown in Fig. \ref{Kvaries_N10}. Fig. \ref{Kvaries_N10_Average} and Fig. \ref{Kvaries_N10_Peak} demonstrate the average and peak power values as a function of $M$, respectively, for $N = 10$ and $K \in \{ 3,4,5 \}$ users. The gap between centralized and decentralized caching increases with $K$ in both figures. The average and peak power values also increase with $K$, as expected. For a fixed $N$ value, the increase in the peak power is higher than the one in the average power. This is due to the fact that the likelihood of common demands increases with $K$, and so does the gap between the average and peak power values.

Finally, we consider the multiplicative gap $\bar{G}_{\rm{C}}$ for $K \in \{ 3,4,5\}$ and $N=10$ in Fig. \ref{N10_KVaries_MultGap}. It can be seen that the gap increases with $K$.

\section{Conclusions}\label{Conc}
We have considered cache-aided content delivery over a Gaussian BC. Considering same rate contents in the library, we have studied both the \textit{minimum peak transmission power}, which is the minimum transmit power that can satisfy all user demand combinations, and the \textit{minimum average transmit power}, averaged across all demand combinations, assuming uniform demand distributions. We have proposed a centralized caching and coded delivery scheme assuming that the channel conditions in the delivery phase are not known beforehand. Coded contents are transmitted in the delivery phase to their intended receivers using superposition coding and power allocation. We have then extended the achievable scheme to the decentralized caching scenario. We have also provided a lower bound on the required peak and average transmission power values assuming uncoded cache placement. Our results indicate that even a small cache capacity at the receivers can provide a significant reduction in the required transmission power level highlighting the benefits of caching in improving the energy efficiency of wireless networks.

\appendices

\section{Proof of Convexity of Function $f(\cdot)$}\label{ProofConvexityOfF}
We show that, for $1 \le N_{\textbf{d}} \le K$, function $f : \mathbb{R}^{N_{\textbf{d}}} \to \mathbb{R}$ is a convex function of $\left( s_1, ..., s_{N_{\textbf{d}}} \right)$:

\begin{equation}\label{FunctionFrewritten}
    f\left( s_1, \dots, s_{N_{\textbf{d}}} \right) = \sum\nolimits_{i = 1}^{N_{{\textbf{d}}}} {\left( \frac{{{2^{2s_k}} - 1}}{h_{{\pi_{\mathcal{U}_{{\textbf{d}}}}(i)}}^2} \right)\prod\nolimits_{j = 1}^{i - 1} {{2^{2s_j}}} }.
\end{equation}
After mathematical manipulation, one can obtain
\begin{align}\label{FunctionFrewrittenManip}
    & f\left( s_1, \dots, s_{N_{\textbf{d}}} \right) = \nonumber\\
    &  \sum\nolimits_{i = 1}^{N_{{\textbf{d}}}}   \left( \left( \frac{1}{h_{{\pi_{\mathcal{U}_{{\textbf{d}}}}(i)}}^2} - \frac{1}{h_{{\pi_{\mathcal{U}_{{\textbf{d}}}}(i+1)}}^2} \right) 2^{2 \textbf{a}_i^T \textbf{s}} \right) - \frac{1}{h_{{\pi_{\mathcal{U}_{{\textbf{d}}}}(1)}}^2},
\end{align}
where $h_{{\pi_{\mathcal{U}_{{\textbf{d}}}}(N_{\textbf{d}}+1)}}^2 \buildrel \Delta \over = \infty$, $\textbf{s} \buildrel \Delta \over = {\left[ s_1 \; s_2 \; \cdots \; s_{N_{\textbf{d}}} \right]^T}$, and 
\begin{equation}\label{AkDefinition}
    \textbf{a}_i \buildrel \Delta \over = \left[ {\underbrace {1 \; 1 \cdots 1}_{1 \times i}\underbrace {0 \; 0 \cdots 0}_{1 \times \left(  {N_{\textbf{d}} - i} \right)  }} \right]^T, \quad \mbox{for $i=1, ..., N_{\textbf{d}}$}.  
\end{equation}
We note that $h_{{\pi_{\mathcal{U}_{{\textbf{d}}}}(i)}}^2 \le h_{{\pi_{\mathcal{U}_{{\textbf{d}}}}(i+1)}}^2$, $\forall i \in \left[ N_{\textbf{d}} \right]$, and function $2^{2s}$, $s \in \mathbb{R}$, is a convex function of $s$. Thus, all the functions $2^{2 \textbf{a}_i^T \textbf{s}}$, $\forall i \in \left[ N_{\textbf{d}} \right]$, are convex since the affine substitution of the arguments preserves convexity. Hence, function $f$ is convex with respect to $\left( s_1, ..., s_{N_{\textbf{d}}} \right)$ since any linear combination of convex functions with non-negative coefficients is convex.

\section{Proof of Worst-Case Demand Combination for Centralized Caching}\label{ProofWorstCaseDemandCentralized}

We first illustrate that, among all the demand combinations with a fixed $N_{\textbf{d}}$, the worst-case is the one when ${\cal U}_{{\textbf{d}}} = \left[ N_{\textbf{d}} \right]$. Let, for a fixed $N_{\textbf{d}}$, $\mathcal{D}_{N_{\textbf{d}}}^{*}$ denotes the set of demand vectors with ${\cal U}_{{\textbf{d}}} = \left[ N_{\textbf{d}} \right]$. Note that, the required power, given in \eqref{RequiredPowerEachDemandDeliveryCentralized}, is the same for any demand vector $\textbf{d} \in \mathcal{D}_{N_{\textbf{d}}}^{*}$. Assume that there is a demand combination $\tilde{\textbf{d}} \notin \mathcal{D}_{N_{\textbf{d}}}^{*}$, with required transmitted power $\tilde{P}^{\rm{C}} \buildrel \Delta \over = {P}^{\rm{C}}_{\rm{UB}}\left( R,M,\tilde{\textbf{d}} \right)$ determined through \eqref{RequiredPowerEachDemandDeliveryCentralized}, that has the highest transmitted power among all the demand combinations with the same $N_{\textbf{d}}$. For such demand vector $\tilde{\textbf{d}} \notin \mathcal{D}_{N_{\textbf{d}}}^{*}$, there is at least one user $k \in {\cal U}_{\tilde{\textbf{d}}}$ such that $k-1 \notin {\cal U}_{\tilde{\textbf{d}}}$, where we have $N_{\tilde{\textbf{d}},k-1} = N_{\tilde{\textbf{d}},k} + 1$. Now, consider another demand vector $\hat{\textbf{d}}$ with the same entries as demand vector ${\tilde{\textbf{d}}}$ except that $k \notin {\cal U}_{\hat{\textbf{d}}}$ and $k-1 \in {\cal U}_{\hat{\textbf{d}}}$. An example of such demand vector $\hat{\textbf{d}}$ is as follows:
\begin{equation}\label{UserDemandd'''Centralized}
\hat{\textbf{d}}_{i} = 
\begin{cases} 
{\tilde{\textbf{d}}}_{i+1}, &\mbox{if $i=k-1$},\\
{\tilde{\textbf{d}}}_{i-1}, &\mbox{if $i=k$},\\
{\tilde{\textbf{d}}}_{i}, &   \mbox{otherwise}.
\end{cases}
\end{equation}
We denote the required transmitted power to satisfy demand vector $\hat{\textbf{d}}$ by $\hat{P}^{\rm{C}}$, where $\hat{P}^{\rm{C}} \buildrel \Delta \over = {P}^{\rm{C}}_{\rm{UB}}\left( R,M,\hat{\textbf{d}} \right)$. Note that, for demand vector $\hat{\textbf{d}}$, we have $N_{\hat{\textbf{d}},k-1} = N_{\hat{\textbf{d}},k} =  N_{\tilde{\textbf{d}},k}$, and from \eqref{TotalRateUserkCentralized}, 
\begin{equation}\label{EqualRatesCentralized}
R^{\rm{C}}_{\hat{\textbf{d}},i} = R^{\rm{C}}_{{{\tilde{\textbf{d}}}},i}, \quad \forall i \in [K]\backslash \{k-1,k\}.
\end{equation}
Having defined
\begin{subequations}
\label{DefinitionLeadernNonLeadersRates}
\begin{align}\label{DefinitionLeadernNonLeadersRates1}
R_{{\rm{L}},k} \buildrel \Delta \over =& \frac{\binom{K-k}{\left\lfloor t \right\rfloor}}{\binom{K}{\left\lfloor t \right\rfloor}}\left( \left\lfloor t \right\rfloor +1 -t \right)R+\frac{\binom{K-k}{\left\lfloor t \right\rfloor+1}}{\binom{K}{\left\lfloor t \right\rfloor+1}}\left(t- \left\lfloor t \right\rfloor \right)R,
\end{align}

\begin{align}
R_{{\rm{NL}},k} \left( N_{{\textbf{d}},k} \right) \buildrel \Delta \over =& \frac{\binom{K-k}{\left\lfloor t \right\rfloor}-\binom{K-k-N_{{\textbf{d}},k}}{\left\lfloor t \right\rfloor}}{\binom{K}{\left\lfloor t \right\rfloor}}\left( \left\lfloor t \right\rfloor +1 -t \right)R + \nonumber\\
& \frac{\binom{K-k}{\left\lfloor t \right\rfloor+1}-\binom{K-k-N_{{\textbf{d}},k}}{\left\lfloor t \right\rfloor+1}}{\binom{K}{\left\lfloor t \right\rfloor+1}}\left(t- \left\lfloor t \right\rfloor \right)R,
\label{DefinitionLeadernNonLeadersRates2}
\end{align}
\end{subequations}
according to \eqref{RequiredPowerEachDemandDeliveryCentralized}, we have
\begin{align}\label{PowerDifferenceCentralized1}
&\hat{P}^{\rm{C}} - {\tilde{P}}^{\rm{C}} = { {\frac{1}{h_{k-1}^2}\left( {{2^{2{R_{{\rm{L}},k-1}}}} - 1} \right)\prod\nolimits_{j = 1}^{k - 2} {{2^{2{R^{\rm{C}}_{{\tilde{\textbf{d}}},j}}}}} } } \nonumber\\
& + { {\frac{1}{h_{k}^2}\left( {{2^{2{{R_{{\rm{NL}},k} \left( N_{\tilde{{\textbf{d}}},k} \right)}}}} - 1} \right) \left( 2^{ 2{R_{{\rm{L}},k-1}} } \right) \prod\nolimits_{j = 1}^{k - 2} {{2^{2{R^{\rm{C}}_{{\tilde{\textbf{d}}},j}}}}} } } \nonumber\\
& + \sum\nolimits_{i = k+1}^K \frac{1}{h_{i}^2}\left( {{2^{2{R^{\rm{C}}_{{\tilde{\textbf{d}}},i}}}} - 1} \right) \nonumber\\
& \qquad \qquad \prod\limits_{\scriptstyle\;\;\;\,j = 1\hfill\atop
\scriptstyle j \ne k - 1,k\hfill}^{i - 1} \left(2^{2{R^{\rm{C}}_{{\tilde{\textbf{d}}},j}}}\right)\left(2^{2{R_{{\rm{L}},k-1}}}\right)\left(2^{2{R_{{\rm{NL}},k} \left( N_{\tilde{{\textbf{d}}},k} \right)}}\right)\nonumber\\
& - \frac{1}{h_{k-1}^2}\left( {{2^{2{{R_{{\rm{NL}},k-1} \left( N_{\tilde{{\textbf{d}}},k} +1 \right)}}}} - 1} \right)\prod\nolimits_{j = 1}^{k - 2} {{2^{2{R^{\rm{C}}_{{\tilde{\textbf{d}}},j}}}}}   \nonumber\\
& - { {\frac{1}{h_{k}^2}\left( {{2^{2R_{{\rm{L}},k}}} - 1} \right) \left( 2^{2{{R_{{\rm{NL}},k-1} \left( N_{\tilde{{\textbf{d}}},k} +1 \right)}}} \right) \prod\nolimits_{j = 1}^{k - 2} {{2^{2{R^{\rm{C}}_{{\tilde{\textbf{d}}},j}}}}} } }\nonumber\\
& - \sum\nolimits_{i = k+1}^K \frac{1}{h_{i}^2}\left( {{2^{2{R^{\rm{C}}_{{\tilde{\textbf{d}}},i}}}} - 1} \right) \nonumber\\
& \qquad \prod\limits_{\scriptstyle\;\;\;\,j = 1\hfill\atop
\scriptstyle j \ne k - 1,k\hfill}^{i - 1} \left(2^{2{R^{\rm{C}}_{{\tilde{\textbf{d}}},j}}}\right)\left(2^{2{{R_{{\rm{NL}},k-1} \left( N_{\tilde{{\textbf{d}}},k} +1 \right)}}}\right)\left(2^{2R_{{\rm{L}},k}}\right),
\end{align}
where we have used $N_{\tilde{\textbf{d}},k-1} = N_{\tilde{\textbf{d}},k} + 1$, and $N_{\hat{\textbf{d}},k-1} = N_{\hat{\textbf{d}},k} =  N_{\tilde{\textbf{d}},k}$. From \eqref{DefinitionLeadernNonLeadersRates}, we have 
\begin{equation}\label{SimplificationWorstCaseProosCentralized}
{R_{{\rm{NL}},k-1} \left( N_{\tilde{{\textbf{d}}},k} +1 \right)} = {R_{{\rm{NL}},k} \left( N_{\tilde{{\textbf{d}}},k} \right)} + R_{{\rm{L}},k-1}- R_{{\rm{L}},k}. 
\end{equation}
By substituting \eqref{SimplificationWorstCaseProosCentralized} into \eqref{PowerDifferenceCentralized1}, we obtain 
\begin{align}\label{PowerDifferenceCentralized2}
\hat{P}^{\rm{C}} - & {\tilde{P}}^{\rm{C}} = \left( \prod\nolimits_{j = 1}^{k - 2} {2^{2{R^{\rm{C}}_{{\tilde{\textbf{d}}},j}}}} \right) \left( 2^{2{R_{{\rm{L}},k-1}}} \right) \nonumber\\
& \qquad \left( 1 - 2^{2{R_{{\rm{NL}},k} \left( N_{\tilde{\textbf{d}},k} \right) - 2R_{{\rm{L}},k}}} \right) \left( \frac{1}{h_{k-1}^2} - \frac{1}{h_{k}^2} \right).
\end{align}
Note that, 
\begin{align}\label{PowerDifferenceCentralized3}
{R_{{\rm{NL}},k} \left( N_{{\textbf{d}},k} \right) - R_{{\rm{L}},k}} &= - \frac{\binom{K-k-N_{{\textbf{d}},k}}{\left\lfloor t \right\rfloor}}{\binom{K}{\left\lfloor t \right\rfloor}}\left( \left\lfloor t \right\rfloor +1 -t \right)R \nonumber\\
& - \frac{\binom{K-k-N_{{\textbf{d}},k}}{\left\lfloor t \right\rfloor+1}}{\binom{K}{\left\lfloor t \right\rfloor+1}}\left(t- \left\lfloor t \right\rfloor \right)R  \le 0,
\end{align}
and since $h_{k-1}^2 \le h_{k}^2$,
from \eqref{PowerDifferenceCentralized2}, we have $\hat{P}^{\rm{C}} \ge {\tilde{P}}^{\rm{C}}$, which contradicts the assumption that demand vector $\tilde{{\textbf{d}}} \notin \mathcal{D}_{N_{\textbf{d}}}^{*}$ has the highest required transmit power among all the demand combinations with the same $N_{\textbf{d}}$. Therefore, for a fixed $N_{\textbf{d}}$, the demand vectors in $\mathcal{D}_{N_{\textbf{d}}}^{*}$, i.e., the demand vectors with ${\cal U}_{{\textbf{d}}} = \left[ N_{\textbf{d}} \right]$, require the maximum transmit power with the proposed scheme. From \eqref{TotalRateUserkCentralized}, for all the demand vectors in $\mathcal{D}_{N_{\textbf{d}}}^{*}$, user $k \in [K]$ is delivered a message of rate  
\begin{align}\label{SubTotalRateEachUserCentralized}
{R^{*}}^{\rm{C}}_{\textbf{d},k} \left( N_{\textbf{d}} \right) \buildrel \Delta \over = 
\begin{cases} 
\frac{\binom{K-k}{\left\lfloor t \right\rfloor}}{\binom{K}{\left\lfloor t \right\rfloor}}\left( \left\lfloor t \right\rfloor +1 -t \right)R\\
\quad + \frac{\binom{K-k}{\left\lfloor t \right\rfloor+1}}{\binom{K}{\left\lfloor t \right\rfloor+1}}\left(t- \left\lfloor t \right\rfloor \right)R, & \mbox{if $k \in [N_{\textbf{d}}]$},\\
0, & \mbox{otherwise}.
\end{cases}
\end{align}
Thus, according to \eqref{RequiredPowerEachDemandDeliveryCentralized}, for a fixed $N_{\textbf{d}}$, the required power to satisfy any demand combination $\textbf{d} \in \mathcal{D}_{N_{\textbf{d}}}^{*}$ is given by

\begin{align}\label{SubRequiredPowerEachDemandDeliveryCentralized}
P^{\rm{C}}_{\rm{UB}}\left( R,M,\textbf{d} \right) = & \sum\limits_{i = 1}^{N_{\textbf{d}}} {\left( \frac{{{2^{2{{R^{*}}^{\rm{C}}_{\textbf{d},i}\left( N_{\textbf{d}} \right)}}} - 1}}{h_i^2} \right)\prod\limits_{j = 1}^{i - 1} {{2^{2{{R^{*}}^{\rm{C}}_{\textbf{d},j}\left( N_{\textbf{d}} \right)}}}} }.
\end{align}
Observe that, \eqref{SubRequiredPowerEachDemandDeliveryCentralized} is an increasing function of $N_{\textbf{d}}$, and the power is maximized for $N_{\textbf{d}} = \min\{N,K\}$. Thus, the worst-case demand combination for the proposed centralized caching and coded delivery scheme happens when $\mathcal{U}_{\textbf{d}} = \left[ \min\{N,K\} \right]$.

\section{Proof of Worst-Case Demand Combination for Decentralized Caching}\label{ProofWorstCaseDemand}
Similarly to the centralized caching scenario, assume that, for a fixed $N_{\textbf{d}}$, there is a demand combination $\tilde{\textbf{d}} \notin \mathcal{D}_{N_{\textbf{d}}}^{*}$, with required transmitted power $\tilde{P}^{\rm{D}} \buildrel \Delta \over = {P}^{\rm{D}}_{\rm{UB}}\left( R,M,\tilde{\textbf{d}} \right)$ determined through \eqref{RequiredPowerEachDemandDelivery}, that has the highest transmitted power among all the demand combinations with the same $N_{\textbf{d}}$. For such demand vector $\tilde{\textbf{d}} \notin \mathcal{D}_{N_{\textbf{d}}}^{*}$, there is at least one user $k \in {\cal U}_{\tilde{\textbf{d}}}$ such that $k-1 \notin {\cal U}_{\tilde{\textbf{d}}}$, where we have $N_{\tilde{\textbf{d}},k-1} = N_{\tilde{\textbf{d}},k} + 1$. Similarly to \eqref{UserDemandd'''Centralized}, consider another demand vector $\hat{\textbf{d}}$ with the same entries as demand vector ${\tilde{\textbf{d}}}$ except that $k \notin {\cal U}_{\hat{\textbf{d}}}$ and $k-1 \in {\cal U}_{\hat{\textbf{d}}}$. We denote the required transmitted power for demand vector $\hat{\textbf{d}}$ by $\hat{P}^{\rm{D}}$, defined as $\hat{P}^{\rm{D}} \buildrel \Delta \over = {P}^{\rm{D}}_{\rm{UB}}\left( R,M,\hat{\textbf{d}} \right)$. Note that, for demand vector $\hat{\textbf{d}}$, we have $N_{\hat{\textbf{d}},k-1} = N_{\hat{\textbf{d}},k} =  N_{\tilde{\textbf{d}},k}$, and from \eqref{TotalRateEachUser}, 
\begin{equation}\label{EqualRatesDecentralized}
R^{\rm{D}}_{\hat{\textbf{d}},i} = R^{\rm{D}}_{{{\tilde{\textbf{d}}}},i}, \quad \forall i \in [K]\backslash \{k-1,k\}.
\end{equation} 
According to \eqref{RequiredPowerEachDemandDelivery}, we have
\begin{align}\label{PowerDifference1}
&\hat{P}^{\rm{D}} - {\tilde{P}}^{\rm{D}} = { {\frac{1}{h_{k-1}^2}\left( {{2^{2{\left( 1-M/N \right)^{k-1}}R}} - 1} \right)\prod\nolimits_{j = 1}^{k - 2} {{2^{2{R^{\rm{D}}_{{\tilde{\textbf{d}}},j}}}}} } } \nonumber\\
& +  \frac{1}{h_{k}^2}\left( {{2^{2{\left( 1-M/N \right)^{k}}{\left( 1- \left( 1-M/N \right)^{N_{\tilde{\textbf{d}},k}} \right)}R}} - 1} \right)\nonumber\\
& \qquad \qquad \qquad \qquad \left( 2^{2{\left( 1-M/N \right)^{k-1}}R} \right) \prod\nolimits_{j = 1}^{k - 2} 2^{2{R^{\rm{D}}_{{\tilde{\textbf{d}}},j}}}   \nonumber\\
& + \sum\nolimits_{i = k+1}^K \frac{1}{h_{i}^2}\left( {{2^{2{R^{\rm{D}}_{{\tilde{\textbf{d}}},i}}}} - 1} \right) \prod\limits_{\scriptstyle\;\;\;\,j = 1\hfill\atop
\scriptstyle j \ne k - 1,k\hfill}^{i - 1} \left(2^{2{R^{\rm{D}}_{{\tilde{\textbf{d}}},j}}}\right) \nonumber\\
& \qquad \left(2^{2{\left(1-M/N \right)^{k-1}}R}\right) \left(2^{2{\left(1-M/N \right)^{k}}{\left(1-\left(1-M/N \right)^{N_{\tilde{\textbf{d}},k}}\right)}R}\right) \nonumber\\
& - \frac{1}{h_{k-1}^2}\left( {{2^{2{\left( 1-M/N \right)^{k-1}}{\left( 1- \left( 1-M/N \right)^{N_{\tilde{\textbf{d}},k}+1} \right)}R}} - 1} \right)\prod\limits_{j = 1}^{k - 2} {{2^{2{R^{\rm{D}}_{{\tilde{\textbf{d}}},j}}}}}  \nonumber\\
& -  \frac{1}{h_{k}^2}\left( {{2^{2{\left( 1-M/N \right)^{k}}R}} - 1} \right) \nonumber\\
& \qquad \qquad \left( 2^{2{\left( 1-M/N \right)^{k-1}}{\left( 1- \left( 1-M/N \right)^{N_{\tilde{\textbf{d}},k}+1} \right)}R} \right) \prod\limits_{j = 1}^{k - 2} 2^{2{R^{\rm{D}}_{{\tilde{\textbf{d}}},j}}}  \nonumber\\
& - \sum\limits_{i = k+1}^K \frac{1}{h_{i}^2}\left( {{2^{2{R^{\rm{D}}_{{\tilde{\textbf{d}}},i}}}} - 1} \right) \prod\limits_{\scriptstyle\;\;\;\,j = 1\hfill\atop
\scriptstyle j \ne k - 1,k\hfill}^{i - 1} \left(2^{2{R^{\rm{D}}_{{\tilde{\textbf{d}}},j}}}\right) \left(2^{2{\left(1-M/N \right)^{k}}R}\right) \nonumber\\
& \qquad \qquad  \left(2^{2{\left(1-M/N \right)^{k-1}}{\left(1-\left(1-M/N \right)^{N_{\tilde{\textbf{d}},k}+1}\right)}R}\right),
\end{align}
which is equal to
\begin{align}\label{PowerDifference2}
&\hat{P}^{\rm{D}} - {\tilde{P}}^{\rm{D}} = \left( \prod\nolimits_{j = 1}^{k - 2} {2^{2{R^{\rm{D}}_{{\tilde{\textbf{d}}},j}}}} \right) \left( 2^{2{\left( 1-M/N \right)^{k-1}}R} \right. \nonumber\\
& \quad \left. - 2^{2{{\left( 1-M/N \right)^{k-1}}\left( 1- \left( 1-M/N \right)^{N_{\tilde{\textbf{d}},k}+1} \right)}R} \right) \left( \frac{1}{h_{k-1}^2} - \frac{1}{h_{k}^2} \right).
\end{align}
Since $h_{k-1}^2 \le h_{k}^2$, and $0 \le \left( 1-M/N \right)^{N_{\tilde{\textbf{d}},k}+1} \le 1$, from \eqref{PowerDifference2}, we have $\hat{P}^{\rm{D}} \ge {\tilde{P}}^{\rm{D}}$, which contradicts the assumption that demand vector $\tilde{{\textbf{d}}} \notin \mathcal{D}_{N_{\textbf{d}}}^{*}$ has the highest required power among all the demand combinations with the same $N_{\textbf{d}}$. Therefore, for a fixed $N_{\textbf{d}}$, the demand vectors in $\mathcal{D}_{N_{\textbf{d}}}^{*}$, i.e., the demand vectors with ${\cal U}_{{\textbf{d}}} = \left[ N_{\textbf{d}} \right]$, require the maximum transmitted power with the proposed scheme. From \eqref{TotalRateEachUser}, for all the demand vectors in $\mathcal{D}_{N_{\textbf{d}}}^{*}$, we have 
\begin{align}\label{SubTotalRateEachUser}
{R^{*}}^{\rm{D}}_{\textbf{d},k} \left( N_{\textbf{d}} \right) \buildrel \Delta \over = 
\begin{cases} 
{\left( {1 - \frac{M}{N}} \right)^k}R, & \mbox{if $k \in [N_{\textbf{d}}]$},\\
0, & \mbox{otherwise}.
\end{cases}
\end{align}
Thus, according to \eqref{RequiredPowerEachDemandDelivery}, the required power to satisfy any demand combination $\textbf{d} \in \mathcal{D}_{N_{\textbf{d}}}^{*}$ is
\begin{align}\label{SubRequiredPowerEachDemandDelivery}
P^{\rm{D}}_{\rm{UB}}\left( R,M,\textbf{d} \right)  = & \sum\limits_{i = 1}^{N_{\textbf{d}}} {\left( \frac{{{2^{2{\left( 1-M/N \right)^{i}}R}} - 1}}{h_i^2} \right)\prod\limits_{j = 1}^{i - 1} {{2^{2{\left( 1-M/N \right)^{j}}R}}} }.
\end{align}
Observe that, \eqref{SubRequiredPowerEachDemandDelivery} is an increasing function of $N_{\textbf{d}}$, and the required power is maximized for the highest value of $N_{\textbf{d}}$, which is $\min\{N,K\}$. Thus, the worst-case demand combination for the proposed scheme happens when $\mathcal{U}_{\textbf{d}} = \left[ \min\{N,K\} \right]$.

\bibliographystyle{IEEEtran}
\bibliography{Report}

\medskip
 
\noindent\textbf{Mohammad Mohammadi Amiri} (S'16) received the B.Sc. degree (Hons.) in Electrical Engineering from the Iran University of Science and Technology in 2011, and the M.Sc. degree (Hons.) in Electrical Engineering from the University of Tehran in 2014. He is currently pursuing the Ph.D. degree with the Imperial College London. His research interests include information and coding theory, wireless communications, MIMO systems, cooperative networks, cognitive radio, and signal processing.\\

\noindent\textbf{Deniz G\"und\"uz} (S'03-M'08-SM'13) received his M.S. and Ph.D. degrees in electrical engineering from NYU Polytechnic School of Engineering (formerly Polytechnic University) in 2004 and 2007, respectively. After his PhD, he served as a postdoctoral research associate at Princeton University, as a consulting assistant professor at Stanford University, and as a research associate at CTTC. In September 2012, he joined the Electrical and Electronic Engineering Department of Imperial College London, UK, where he is currently a Reader in information theory and communications, and leads the Information Processing and Communications Lab. His research interests lie in the areas of communications and information theory, machine learning, and security and privacy in cyber-physical systems. Dr. G\"und\"uz is an Editor of the IEEE Transactions on Communications, and the IEEE Transactions on Green Communications and Networking. He is the recipient of the IEEE Communications Society - Communication Theory Technical Committee (CTTC) Early Achievement Award in 2017 and a Starting Grant of the European Research Council (ERC) in 2016. He coauthored papers that received a Best Paper Award at the 2016 IEEE WCNC, and Best Student Paper Awards at 2007 IEEE ISIT and 2018 IEEE WCNC.

\end{document}